\documentclass[12pt, a4paper]{article}

\usepackage[utf8]{inputenc}
\usepackage[T1]{fontenc}
\usepackage{setspace}
\usepackage{amsmath, amssymb, amsfonts, amsthm}
\usepackage{mathtools}
\usepackage{enumitem}
\usepackage{booktabs}
\usepackage{tabularx}
\usepackage{threeparttable}
\usepackage{graphicx}
\usepackage{float}
\usepackage{hyperref}
\usepackage{natbib}
\usepackage{caption}

\usepackage{booktabs}
\usepackage{tabularx}
\usepackage{array}

\newcolumntype{L}[1]{>{\raggedright\arraybackslash}p{#1}}
\newcolumntype{Y}{>{\raggedright\arraybackslash}X}

\newtheorem{proposition}{Proposition}
\newtheorem{lemma}{Lemma}

\theoremstyle{definition}
\newtheorem{definition}{Definition}
\newtheorem{assumption}{Assumption}

\theoremstyle{remark}

\usepackage[utf8]{inputenc}
\usepackage[T1]{fontenc}
\usepackage{amsmath,amssymb,amsthm}
\usepackage{graphicx}
\usepackage{booktabs}
\usepackage{tabularx}
\usepackage{multirow}
\usepackage[margin=1in]{geometry}
\usepackage{xcolor}

\hypersetup{colorlinks=true,linkcolor=blue!60!black,citecolor=blue!60!black,urlcolor=blue!60!black}

\title{
Delegation Rights: Property, Agency, and Investment Incentives\\
in the Age of AI Agents
}

\author{
Yukun Zhang\\
The Chinese University of Hong Kong\\
Hong Kong, China\\
\texttt{215010026@link.cuhk.edu.cn}
\and
Kemu Xu\\
University of Edinburgh\\
Edinburgh, United Kingdom\\
\texttt{s2749200@ed.ac.uk}
}

\begin{document}

\maketitle


\begin{abstract}
\noindent
AI agents increasingly operate inside digital accounts rather than merely around platform interfaces. A user may authorize an agent to search, compare, draft, book, or transact by exercising privileges that the user already holds. This paper studies who should control this mode of account use: the platform, the user, or a conditional certification regime. We define \emph{delegation rights} as the revocable, identity-preserving, scope-limited, and mode-specific authority of an account holder to authorize an automated proxy to exercise existing account privileges on her behalf.

We develop a three-party incomplete-contracts model with a User, an AI Agent provider, and a Platform. The contested control object is not platform ownership, account transferability, data portability, or unrestricted API access, but the residual authority to determine whether an existing account entitlement must be exercised manually or may be exercised through a user-authorized automated proxy. Under Platform Control, the platform can protect infrastructure, identity systems, privacy boundaries, and third parties, but its discretionary veto over automated access weakens the User--Agent coalition's disagreement payoff and depresses relationship-specific investment. Under User Control, the hold-up problem is reduced, but some security, privacy, congestion, and third-party risks may remain outside the agent's private incentives.

We then analyze \emph{Certified Delegation}, under which access protection is conditional on verifiable requirements such as explicit authorization, revocability, auditability, rate-limit compliance, data minimization, and risk mitigation. Certification is therefore not merely a technical safety screen; it is a conditional allocation of residual control. A certified agent receives a protected access path, while an uncertified or non-compliant agent remains subject to platform refusal. Illustrative mechanism simulations, not intended as structural estimates of any specific dispute, show how such a conditional regime can reduce deadweight loss relative to the two polar regimes by restoring productive delegation incentives while bounding residual risk.

\medskip
\noindent \textbf{JEL Classification:} D23, D86, K11, L12, L86, O34

\noindent \textbf{Keywords:} Delegation rights; AI agents; incomplete contracts; property rights; platform governance; certification; digital platforms
\end{abstract}

\newpage
\tableofcontents
\newpage

\section{Introduction}
\label{sec:introduction}

\subsection{Motivation: From Account Use to Delegated Account Operation}
\label{subsec:intro_motivation}

Digital platforms are organized around user accounts. A user logs in, searches, communicates, purchases, pays, books, or manages information inside an environment governed by the platform's technical design and contractual rules. For much of the internet's history, this arrangement rested on a relatively stable premise: the account holder was also the person operating the account. External intermediaries existed, but they usually remained outside the private account boundary. Search engines indexed public information; browser extensions modified local display; and platform-approved APIs allowed developers to interact only through interfaces specified by the platform.

AI agents make this premise less stable. A user may now authorize a software agent to compare products, complete forms, summarize messages, draft replies, coordinate bookings, or execute multi-step workflows across digital services. The agent does not own the account, and it does not ordinarily claim an independent entitlement against the platform. It acts for a user who already has permission to perform the relevant actions manually. This raises a narrower question than general platform access: if a user may do something inside her account, may she choose an automated proxy to do it on her behalf?

The question concerns the \emph{mode} through which an existing account entitlement is exercised. Platforms have legitimate reasons to regulate that mode. Automated operation may increase system load, weaken fraud-detection signals, bypass advertising or ranking interfaces, expose third-party data, or blur the boundary between human and machine communication. At the same time, a broad platform veto over user-authorized automated proxies can protect incumbent business models, limit user-side intermediation, and discourage investment in valuable automation. The same technical restriction can therefore serve both safety-preserving and exclusionary functions.

This makes account-level delegation a property-rights problem. The platform's ability to classify user-authorized machine operation as impermissible access functions as a residual control right over the user's mode of account execution. The contested asset is not the platform's infrastructure, data, or user account itself. It is the residual authority to determine whether an existing account entitlement must be exercised manually or may be exercised through a user-authorized automated proxy.

The allocation of this authority affects investment. Users may invest in workflow design, preference specification, and monitoring routines. Agent providers may invest in platform-specific adaptation, execution reliability, authorization controls, and compliance systems. Platforms may invest in authentication, privacy protection, payment security, fraud detection, and infrastructure stability. These investments are relationship-specific, while many relevant contingencies are difficult to specify in advance. Agent capabilities change, platform interfaces are redesigned, security vulnerabilities emerge, and regulatory expectations evolve. The party that controls whether machine-mediated account use is permitted therefore also controls an important bargaining position after investments are sunk.

This paper studies that control problem. The argument is not that users should have an unrestricted right to automate all platform interactions. Nor is it that platforms should have an unconditional right to exclude every machine proxy. The narrower claim is that AI agents create a distinct margin of digital control: the choice of whether an account holder may exercise existing rights through an authorized automated representative. The efficient allocation of this margin depends on both investment incentives and safety externalities.

\subsection{Delegation Rights and Research Question}
\label{subsec:intro_question}

We call this control margin a \emph{delegation right}. A delegation right is the revocable, identity-preserving, scope-limited, and mode-specific authority of an account holder to authorize an automated proxy to exercise existing account privileges on her behalf.

The qualifications are central to the concept. The right is \emph{revocable} because the user must be able to terminate proxy access. It is \emph{identity-preserving} because the agent acts as the user's representative rather than as a new account owner. It is \emph{scope-limited} because the agent cannot exceed the user's existing contractual and technical permissions. It is \emph{mode-specific} because it concerns how an existing right is exercised, not whether the underlying right exists.

Delegation rights are therefore distinct from neighboring legal and economic categories. They are not platform ownership, because they do not transfer control over servers, code, databases, matching systems, identity systems, or governance architecture. They are not ordinary use rights, which determine whether a user may participate in the platform. They are not account transferability, which changes the identity of the entitlement holder. They are not data portability, which concerns the extraction and movement of data across services. They are also not equivalent to API access, which is usually defined by platform-controlled endpoints and developer agreements. Delegation rights concern a different question: whether the user may select an automated proxy as the operational means of using rights she already holds.

The paper asks:

\begin{quote}
\emph{When a user authorizes an AI agent to operate within a digital platform account, who should hold residual control over that delegated operation: the platform, the user, or a conditional certification regime?}
\end{quote}

This question differs from standard platform-access disputes. In a conventional complementor-access case, an external developer or rival seeks entry into a platform ecosystem on its own behalf. In the delegation setting, the agent's authority is derivative of the user. The relevant institutional object is not entry as an independent platform participant, but proxy execution of pre-existing account privileges. That distinction changes both the property-rights analysis and the design of possible remedies.

\subsection{Core Argument}
\label{subsec:intro_argument}

The analysis compares three regimes.

Under \emph{Platform Control}, the platform may exclude automated proxies at its discretion. This regime helps the platform protect infrastructure, identity systems, privacy boundaries, payment integrity, fraud-detection systems, and network stability. Its cost is hold-up. Once the user and the agent provider have invested in platform-specific workflows, the platform can threaten exclusion or extract surplus. Anticipating this, the User--Agent coalition underinvests in automation, adaptation, and compliance.

Under \emph{User Control}, the account holder has a protected right to delegate account operation to an automated proxy. This strengthens the outside option of the User--Agent coalition and reduces platform hold-up. But it creates a different distortion. The agent provider does not necessarily bear all of the costs created by automated execution. Some risks fall on the platform, non-delegating users, counterparties, or the broader network. Unconditional delegation may therefore lead to insufficient safety investment, excessive system load, privacy leakage, or weakened accountability.

The third regime is \emph{Certified Delegation}. Under this arrangement, delegation is protected only when the agent satisfies verifiable requirements for authorization, auditability, rate-limit compliance, data minimization, security, and accountability. A certified agent receives a credible access path. The platform cannot refuse access merely because the operator is automated. An uncertified or non-compliant agent remains subject to exclusion.

Certification is not only a safety screen; it is a conditional allocation of residual control. It changes the platform's right of refusal. When the agent satisfies the relevant standard, the platform's discretion to exclude is limited. When the agent fails the standard, the platform retains authority to protect its system and affected third parties. Certified Delegation is therefore a second-best arrangement: it limits arbitrary exclusion without eliminating the platform's ability to refuse unsafe or non-compliant automation.

\subsection{Scope Conditions}
\label{subsec:intro_scope}

The argument is conditional rather than absolutist. Delegation rights do not protect automation that exceeds the user's account privileges, bypasses security systems, violates revocation, ignores rate limits, or exposes third-party data outside the authorized task scope. They do not create a general right to scrape, circumvent authentication, defeat CAPTCHA systems, evade payment security, or override access controls. They also do not give third-party agent providers an independent entitlement to enter a platform without user authorization.

Nor do delegation rights require platforms to surrender ownership of their infrastructure or expose unrestricted APIs. A delegation right concerns the permissible mode of exercising an existing account entitlement. It does not expand the substantive bundle of account rights, transfer the user's account to the agent, or require the platform to redesign its entire technical architecture for external automation.

These scope conditions are important for both the theory and the policy analysis. If automation is unauthorized, non-revocable, out of scope, non-auditable, or unsafe, the delegation claim fails. The relevant question is not whether automation should always be allowed. It is whether authorized, bounded, revocable, and accountable proxy execution should receive protection against arbitrary exclusion.

\subsection{Model and Main Results}
\label{subsec:intro_results}

We study the problem in a three-party incomplete-contracts model with a User ($U$), an AI Agent provider ($A$), and a Platform ($P$). Before bargaining, the parties choose non-contractible, relationship-specific investments. The user invests in workflow design, preference specification, and monitoring. The agent provider invests in platform-specific execution capability and compliance-related adaptation. The platform invests in infrastructure, authentication, fraud prevention, and system stability.

Automated account use generates trilateral surplus but may also impose a safety externality. Ex-post surplus is divided through Shapley bargaining. The governance regime changes coalition values and disagreement payoffs, which then feed back into ex-ante investment incentives. Platform Control, User Control, and Certified Delegation differ not because they change the underlying production technology, but because they allocate residual control over delegated account execution differently.

The analysis produces five main results.

First, Platform Control depresses user and agent investment. Because the platform can block automated access after investments are sunk, the User--Agent coalition has a weak disagreement payoff. This lowers the private marginal return to workflow design and platform-specific agent adaptation, generating underinvestment relative to a benchmark in which relationship-specific investments are fully rewarded.

Second, User Control mitigates hold-up but does not solve the safety problem. A protected delegation right improves the User--Agent coalition's outside option. However, the agent captures only part of the social value of risk reduction. Its private incentive to invest in safety and compliance is therefore below the social incentive when some harms are borne by the platform or third parties.

Third, Certified Delegation creates a conditional allocation of control. If the agent satisfies a verifiable risk standard, the effective regime resembles User Control. If it fails, the regime reverts to Platform Control. This protects qualified agents from arbitrary exclusion while preserving the platform's right to block unsafe proxies.

Fourth, the preferred regime depends on the environment. User delegation is more attractive when user adaptation and agent capability are central to value creation and when non-cooperative workarounds are effective. Platform control is more attractive when infrastructure integrity, identity protection, social-graph privacy, or third-party externalities dominate the welfare calculation. The model therefore supports a contribution-threshold logic rather than a universal rule.

Fifth, illustrative mechanism simulations show how the model behaves under maintained parameter values. These simulations are not intended as structural estimates of any particular platform dispute. They show that Platform Control can generate hold-up, User Control can leave residual externalities, and Certified Delegation can reduce deadweight loss by restoring productive delegation incentives while bounding residual risk.

\subsection{Contributions}
\label{subsec:intro_contributions}

The paper makes three contributions.

First, it makes a conceptual contribution by identifying delegation rights as a distinct margin in the digital property-rights bundle. The relevant object of control is not platform infrastructure, account ownership, data portability, account transferability, ordinary use, or API access. It is the operational mode through which an account holder exercises existing rights. This distinction matters because user-authorized AI agents occupy an intermediate institutional position: they are neither ordinary human users nor independent platform complementors.

Second, it makes a theoretical contribution by modeling AI-agent account operation as a residual-control allocation problem. In the model, the account interface becomes the relevant control point. A platform veto over automated operation weakens the User--Agent coalition's disagreement payoff and reduces relationship-specific investment. A protected delegation right improves the coalition's outside option but may leave some safety and privacy costs insufficiently internalized. The model therefore links platform control, user-agent bargaining power, investment incentives, and welfare in a single incomplete-contracts framework.

Third, it makes an institutional contribution by interpreting certification as a conditional allocation of residual control. In many policy discussions, certification is treated mainly as a technical safety screen. Here, certification also changes bargaining power and access rights. A certified proxy receives a protected access path; an uncertified or non-compliant proxy remains excludable. This interpretation connects AI-agent safety, platform governance, and property-rights theory.

These contributions are intentionally limited. The paper does not defend a general right to automate. It defends a qualified delegation right: user-authorized, revocable, scope-limited, identity-preserving, auditable, and proportionate to risk. The policy implication is not unconditional openness, but certification-contingent access protection.

\subsection{Roadmap}
\label{subsec:intro_roadmap}

The rest of the paper proceeds as follows. Section~\ref{sec:institutional} develops the institutional foundation of delegation rights and distinguishes them from ownership, ordinary use, account transferability, data portability, and API access. Section~\ref{sec:lit} relates the argument to property-rights theory, platform governance, interoperability mandates, cyberlaw, and AI-agent safety certification. Section~\ref{sec:model} presents the incomplete-contracts model and compares Platform Control with User Control. Section~\ref{sec:certified} introduces Certified Delegation and derives the relevant incentive and participation constraints. Section~\ref{sec:calibration} provides illustrative mechanism simulations and stylized case counterfactuals. Section~\ref{sec:policy_conclusion} discusses policy implications and concludes.

\begin{figure}[htbp]
\centering
\includegraphics[width=\textwidth]{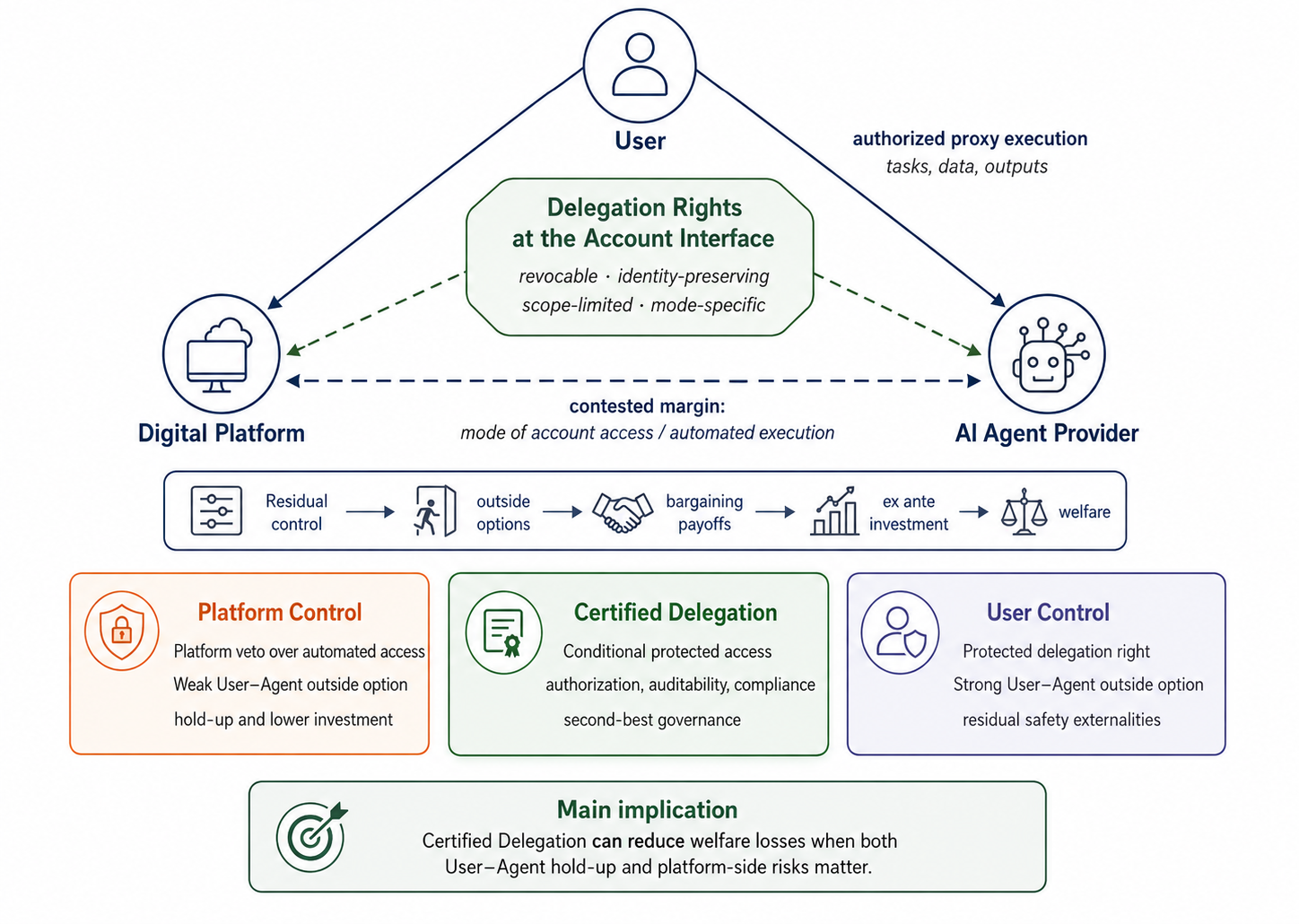}
\caption{Delegation rights at the account interface. The figure summarizes the three-party control problem among the User, Digital Platform, and AI Agent Provider. Delegation rights define a revocable, identity-preserving, scope-limited, and mode-specific authority for authorized proxy execution. The contested margin is the mode of account access and automated execution, which shapes residual control, outside options, bargaining payoffs, ex ante investment, and welfare. Platform Control preserves platform veto power but weakens the User--Agent outside option. User Control strengthens delegation but leaves residual safety externalities. Certified Delegation conditions protected access on authorization, auditability, and compliance. Certification can reduce welfare losses when both User--Agent hold-up and platform-side risks matter.}
\label{fig:theoretical_framework}
\end{figure}

\section{Institutional Foundation: AI Agents and the Delegation Gap}
\label{sec:institutional}

This section clarifies the institutional object studied in the paper. The problem is not whether platforms own their infrastructure, whether users hold ordinary account rights, or whether regulators should mandate data portability. The narrower question is whether an account holder may choose an automated agent as the means of exercising rights that she already holds. Existing digital-rights categories do not answer this question directly. We refer to this missing control margin as the \emph{delegation gap}.

The delegation gap arises because AI agents occupy an intermediate position. They are not ordinary users, because they act through software rather than direct human operation. They are not independent complementors, because their authority derives from the account holder rather than from a separate platform agreement. They are not merely data-portability tools, because they operate inside an ongoing account environment rather than only exporting information. This hybrid status makes the allocation of control over automated account operation economically important and legally unsettled.

\subsection{AI Agents as Account-Level Operators}
\label{subsec:account_level_operators}

Conventional digital tools usually remain outside the user's private account boundary. A search engine indexes public information and returns links. A browser extension may change how a page is displayed on the user's device, but it typically leaves the user as the party who decides and executes each transaction. A platform-approved API client interacts with the platform through a technical interface designed, limited, and monitored by the platform itself.

AI agents differ from these tools in a simple but important way: they can operate as account-level proxies. Once authorized by a user, an agent may read information, compare alternatives, fill forms, draft messages, make recommendations, and coordinate multi-step workflows within the user's account environment. The agent's role is therefore not limited to information retrieval or interface display. It may become the operational channel through which the user's account rights are exercised.

This does not mean that the agent acquires an independent right to enter the platform. Its authority is derivative. The agent acts for the account holder and within the scope of the account holder's permission. The institutional question is therefore not whether a third party may demand access to the platform on its own behalf. It is whether a user who can manually perform an action may delegate the execution of that same action to a machine proxy.

This shift changes the relevant margin of governance. The dispute is not primarily about the substantive content of the user's right: whether she may read a message, search a listing, place an order, or manage a booking. The dispute concerns the \emph{mode of execution}. Platforms may wish to restrict that mode because automated operation can create security, privacy, congestion, fraud-detection, or commercial-design risks. Users and agent providers may object because a categorical ban on automation can block valuable workflow innovations. The resulting conflict is precisely the delegation gap.

\subsection{Definition of Delegation Rights}
\label{subsec:definition_delegation_rights}

\begin{definition}[Delegation Right]
A \emph{delegation right} is the revocable, identity-preserving, scope-limited, and mode-specific right of an account holder to authorize an automated machine proxy to exercise pre-existing account-level usage entitlements on her behalf.
\end{definition}

The definition has four components.

First, the right is \emph{revocable}. The user must be able to withdraw authorization and terminate the agent's access. Without revocation, delegation begins to resemble account transfer or uncontrolled third-party access.

Second, the right is \emph{identity-preserving}. The agent does not replace the account holder as the contractual principal. It acts as a representative of the user, not as a new account owner or an independent platform participant.

Third, the right is \emph{scope-limited}. The agent may exercise only those actions that fall within the user's existing account privileges and the authorization granted by the user. Delegation does not expand the substantive bundle of platform rights.

Fourth, the right is \emph{mode-specific}. It governs the operational vehicle through which an existing right is exercised. It does not create a general right to access platform infrastructure, extract data, bypass security systems, or demand a new API.

These qualifications are essential. They make delegation rights narrower than a broad right to automate, scrape, or interoperate, but broader than a purely platform-discretionary permission. Delegation rights allocate residual control over one specific question: whether safe, authorized, and accountable machine-mediated execution is a permissible way for a user to use her account.

\subsection{Delegation and Adjacent Digital Rights}
\label{subsec:delegation_adjacent_rights}

Delegation rights are easiest to define by separating them from neighboring legal-economic categories.

They are not infrastructure ownership. A delegation right gives the user no control over the platform's servers, source code, databases, matching algorithms, authentication systems, or governance architecture. Those assets remain controlled by the platform. The delegation claim is marginal: conditional on the user already holding a valid account right, who controls the choice of execution interface?

Delegation rights are also not ordinary use rights. Ordinary use rights determine whether a user may enter the platform and consume its services. They usually say little about whether the user's actions must be performed manually or may be carried out by an authorized proxy. AI agents reveal this uncontracted margin.

Nor are delegation rights equivalent to account transferability. Transferability changes the identity of the entitlement holder. Delegation preserves that identity. The user remains the principal, while the agent acts within the user's account scope.

Delegation is also distinct from data portability. Portability governs the extraction and transmission of data, often from one service to another. Delegation governs operation within an ongoing platform environment. A platform can comply with data-portability obligations while still prohibiting automated account operation. Conversely, an agent can help a user operate her account without exporting the account's data to another platform.

Finally, delegation rights differ from platform-governed API access. API access is usually designed and licensed by the platform. The platform defines the endpoint, the rate limits, the data fields, the authentication method, and the developer terms. Delegation rights start from the user's authorization rather than from the platform's developer program. A certified delegation regime may use APIs as its technical implementation, but the prior question is institutional: should a user-authorized proxy be protected as a permissible mode of account use when it satisfies safety and accountability conditions?

Table~\ref{tab:delegation_gap} summarizes these distinctions.

\begin{table}[htbp]
\centering
\small
\caption{Delegation Rights and Adjacent Legal-Economic Categories}
\label{tab:delegation_gap}
\renewcommand{\arraystretch}{1.18}
\begin{tabular}{p{0.20\textwidth} p{0.29\textwidth} p{0.43\textwidth}}
\hline
\textbf{Category} & \textbf{Core Control Object} & \textbf{Why It Does Not Resolve AI-Agent Delegation} \tabularnewline
\hline
Ownership & Platform infrastructure and algorithmic capital & Delegation transfers no ownership over servers, source code, databases, matching systems, or core governance architecture. \tabularnewline
Ordinary use & Permission to enter and consume platform services & Ordinary use rights determine whether the user may use the service, but they do not specify whether execution must be human or may be delegated to a machine proxy. \tabularnewline
Account transferability & Assignment of the account or contract to a new principal & Delegation preserves the user's identity. The proxy acts on behalf of the existing account holder rather than becoming a new entitlement holder. \tabularnewline
Data portability & Extraction and movement of data across services & Portability concerns data transfer, not continuous operation inside the account interface. A platform may permit data export while still prohibiting automated account operation. \tabularnewline
API access & Platform-defined programmatic interface for external developers & API access is structured around platform permission and technical endpoints. Delegation begins from user authorization and asks whether that authorization receives protection. \tabularnewline
Delegation rights & Automated proxy execution of existing account privileges & Delegation rights isolate the residual-control question over the permissible mode of account operation. \tabularnewline
\hline
\end{tabular}
\end{table}

\subsection{Two Motivating Disputes}
\label{subsec:motivating_disputes}

Two stylized disputes help illustrate why the same delegation question can have different welfare implications across platform environments. The point is not to adjudicate the legal merits of any particular case. Rather, the examples show why a uniform rule is unlikely to be optimal.

\paragraph{AI-assisted commerce.}
In AI-assisted commerce, delegation can create substantial user-side surplus. A shopping agent may compare listings, normalize product attributes, check prices across sellers, summarize reviews, identify substitutes, and map choices to the user's preferences. These tasks are time-consuming for humans and often require repeated search across interfaces. In this environment, the agent's contribution is mainly productive: it reduces search costs and improves matching.

The platform, however, may still face legitimate concerns. Automated account operation can affect advertising placement, distort impression metrics, increase infrastructure load, weaken fraud-detection signals, or conflict with existing terms of service. The Amazon--Perplexity dispute is useful as an archetype because it places these two forces in tension: user-agent automation may generate meaningful consumer value, while the host platform may face operational and commercial risks.

\paragraph{Closed social ecosystems.}
Closed social ecosystems raise a different set of concerns. A machine proxy inside a messaging or social-network account may summarize conversations, draft replies, organize contacts, or coordinate group activity. These functions may benefit the delegating user, but they also touch information supplied by non-delegating parties. Messages, contact networks, group membership, social context, and interpersonal signals are not purely private inputs controlled by the delegating account holder.

The risks therefore extend beyond the user-agent relationship. Automated operation may expose sensitive communications, alter expectations of authenticity, or blur whether a message reflects human intent or machine-generated response. The WeChat--Doubao dispute serves as an archetype of this region of the problem: identity, privacy, and social-graph integrity carry greater weight than in a standard shopping environment.

The contrast between commerce platforms and closed social ecosystems illustrates the central institutional lesson. Delegation rights should not be governed by a simple binary rule. The welfare effects of protecting automated delegation depend on the value created by the user-agent coalition, the risks imposed on the platform and third parties, and the feasibility of verifying safe operation.

\subsection{From Institutional Conflict to Economic Model}
\label{subsec:institutional_to_model}

The delegation gap maps naturally into an incomplete-contracts framework. The relevant residual-control object is the permissible mode of account execution. This object is difficult to allocate fully by ex-ante contract because the relevant states change over time: platform interfaces are redesigned, agent capabilities improve, security vulnerabilities emerge, and regulatory expectations evolve.

The control allocation determines the parties' disagreement payoffs. Under Platform Control, the platform can block automated proxy execution if negotiations fail. The User--Agent coalition then loses the value of machine-mediated operation within the account, which weakens its bargaining position after relationship-specific investments have already been made. Under User Control, the user's delegation right is protected, so the User--Agent coalition retains a positive outside option through protected proxy execution, manual fallback, or technical workarounds.

These disagreement payoffs affect surplus division, and surplus division affects investment. If the platform holds an unconditional veto, users and agents may underinvest in workflow design, adaptation, and compliance because they expect part of the return to be appropriated ex post. If users hold an unconditional delegation right, the user-agent coalition may invest more, but the agent may not fully internalize the safety, privacy, or congestion costs imposed on the platform and third parties.

The model below formalizes this trade-off. We first compare the two polar regimes, Platform Control and User Control. We then introduce Certified Delegation as a second-best mechanism that protects user-authorized proxy operation only when the agent satisfies verifiable risk-mitigation and accountability requirements.

\section{Related Literature}
\label{sec:lit}

This paper is related to five bodies of work: property-rights theory, platform governance, data portability and interoperability, electronic-agent authorization, and AI-agent safety. Each literature speaks to part of the problem studied here. Property-rights theory explains why residual control matters when investments are non-contractible. Platform economics studies how digital platforms govern access to their ecosystems. Portability and interoperability rules create user-facing access rights, but mainly for data transfer or standardized interfaces. Cyberlaw and security engineering address authorization and machine execution. AI-safety work studies how automated agents can be evaluated and audited. The gap is that none of these literatures directly models the user's ability to authorize an automated proxy to operate inside an existing account. This paper treats that ability as a distinct residual-control margin.

\subsection{Property Rights, Incomplete Contracts, and Delegated Control}
\label{subsec:lit_prop_rights}

The model builds on the property-rights approach to incomplete contracts. \citet{KleinCrawfordAlchian1978} and \citet{Williamson1985} emphasize hold-up, appropriable quasi-rents, and the limits of complete contracting. \citet{GrossmanHart1986}, \citet{HartMoore1990}, and \citet{Hart1995} show that the allocation of residual control affects ex-post bargaining and, through that channel, ex-ante relationship-specific investment.

The same logic applies naturally to account-level AI delegation. The relevant control object, however, is not a factory, a physical asset, or a firm boundary. It is the permissible mode of using an account interface. A user may have the formal right to use an account, but the platform may retain practical authority over whether that use must be human-operated or may be carried out by an automated proxy. In this sense, the delegation problem is close to the distinction between formal and real authority in \citet{AghionTirole1997}. The user remains the contractual account holder, but the platform's control over execution mode may determine the value of the user's delegated use right.

The paper also uses a cooperative bargaining structure based on \citet{Shapley1953}, with related motivation from work on bargaining, multi-party surplus division, and hold-up (\citealp{HolmstromTirole1989}; \citealp{Tirole1999}). This allows the model to track how Platform Control, User Control, and Certified Delegation change coalition values and hence investment incentives. The contribution is not a new bargaining solution. It is to apply residual-control logic to a new object: account-level proxy execution.

\subsection{Platform Governance, Gatekeeping, and Complementor Access}
\label{subsec:lit_platforms}

A second related literature studies platforms as multi-sided markets and private governors of digital ecosystems. \citet{RochetTirole2003}, \citet{RochetTirole2006}, \citet{Armstrong2006}, and \citet{ParkerVanAlstyne2005} provide the standard framework for two-sided pricing and cross-side network effects. Work on platform openness and complementor access, including \citet{Boudreau2010}, \citet{BoudreauHagiu2009}, \citet{EisenmannParkerVanAlstyne2009}, and \citet{TiwanaKonsynskiBush2010}, shows how platforms use technical and contractual rules to shape participation by developers, merchants, and other complementors. Policy work such as \citet{CremerMontjoyeSchweitzer2019} and the \citet{StiglerReport2019} further highlights the gatekeeping power of large digital platforms.

That literature usually studies access from the standpoint of complementors or competing services. The delegation problem is different. An AI agent operating inside an account does not necessarily enter the platform as an independent app, merchant, or data recipient. It may act as a representative of the user. The economic question is therefore not only whether platforms should open interfaces to outsiders, but whether a user can choose the means by which her own account rights are exercised.

This distinction matters for both competition and governance. A platform may have legitimate reasons to restrict automated operation, including fraud prevention, privacy protection, and system-load management. At the same time, a categorical ban on user-authorized agents may protect the platform from a new layer of user-side intermediation. The model captures this tension by treating proxy delegation as a residual-control problem rather than as a standard complementor-access problem.

\subsection{Data Portability, Interoperability, and the Limits of Access Mandates}
\label{subsec:lit_portability}

The paper also relates to work on data portability and interoperability. In law and policy, Article 20 of the GDPR \citep{GDPR2016}, the Digital Markets Act \citep{DMA2022}, the Digital Services Act \citep{DSA2022}, and the Data Act \citep{DataAct2023} create or strengthen rights to data access, transfer, or platform interoperability. Economic and legal analyses by \citet{SwireLagos2013}, \citet{GraefVerschakelenValcke2013}, \citet{KramerStuedlein2019}, \citet{Kramer2021}, \citet{KramerSenellartDeStreel2020}, and \citet{OECD2021Portability} examine how such mandates affect switching costs, contestability, privacy, and platform incentives.

Delegation rights are related to these access rights but not identical to them. Portability concerns the movement of data out of one environment and into another. Interoperability typically concerns standardized connections between systems or developer-facing interfaces. By contrast, account-level delegation concerns continuous operation inside an existing account environment. A platform may comply with portability obligations while still banning automated account operation. Conversely, an agent may help a user act inside an account without exporting the user's data to a competing service.

The paper therefore treats delegation as a complement to portability and interoperability, not as a substitute. The relevant question is not only whether data can move, or whether an interface must be exposed. It is whether a user-authorized proxy can exercise existing account privileges when it satisfies appropriate limits on authorization, scope, and risk.

\subsection{Electronic Agents, Authorization Protocols, and Access Legality}
\label{subsec:lit_authorization}

A separate line of work studies electronic agents, delegated authorization, and access legality. In security engineering, OAuth and related standards (\citealp{RFC6749}; \citealp{RFC9396}; \citealp{RFC9700}) provide technical tools for delegated authorization, scope-limited tokens, and revocation. These protocols are important because they show that delegation can be made granular and revocable. They do not, however, answer the legal or economic question of whether user-authorized machine operation should receive access protection over a platform's objection.

Legal scholarship has examined related questions. \citet{Weitzenboeck2004} discusses when actions by electronic agents bind a principal. \citet{Calo2015} analyzes how automated systems strain traditional cyberlaw categories. Recent U.S. cases also show that the boundary between authorized access, terms-of-service restrictions, and automated interaction remains contested. \citet{VanBuren2021} narrowed the interpretation of ``exceeds authorized access'' under the CFAA, while \citet{hiQLinkedIn2022} addressed automated access to publicly available platform data.

These debates do not map perfectly onto the delegation problem. Much of the case law concerns scraping, public data, or access after objection by the host. The setting here is narrower: the user has an account and authorizes a proxy to act on her behalf within the scope of that account. The paper contributes by giving this setting an economic structure. It asks how different allocations of control over delegated execution affect bargaining and investment.

\subsection{AI-Agent Safety, Auditing Protocols, and Conditional Governance}
\label{subsec:lit_safety}

The paper is also connected to the emerging literature on AI-agent capability, evaluation, and auditing. Benchmarks such as \citet{WebArena2023} study whether language-model agents can complete multi-step web tasks. Commercial systems such as \citet{OpenAIOperator2025}, \citet{OpenAIChatGPTAgent2025}, and \citet{AnthropicComputerUse2024} illustrate the practical movement from chatbot interaction toward agents that operate software interfaces on behalf of users. Policy and standards documents such as \citet{NISTAIRMF2023}, \citet{NISTGenAIProfile2024}, \citet{ISO42001}, and the EU AI Act \citep{EUAIAct2024} emphasize risk management, monitoring, documentation, and accountability. Work on algorithmic auditing, including \citet{RajiEtAl2020}, provides a framework for evaluating systems after deployment.

This literature makes certification and auditing a plausible governance tool. The present paper asks what certification does in an economic model of platform control. In the model, certification is not only a safety screen. It also changes bargaining power. A certified proxy receives a protected access path; an uncertified proxy remains excludable. This turns technical verification into a conditional allocation of residual control.

The analysis therefore connects AI-agent safety to platform economics. Safety standards matter not only because they reduce expected harm directly, but also because they can provide a rule-based condition under which platforms lose the ability to exclude user-authorized agents arbitrarily.

\subsection{Positioning}
\label{subsec:lit_positioning}

The closest way to position this paper is as follows. Property-rights theory provides the mechanism: control rights affect bargaining and investment. Platform economics provides the setting: private platforms govern access to digital interfaces. Portability, interoperability, and cyberlaw provide related legal categories, but they focus mainly on data movement, public access, or platform-designed interfaces. AI-safety work provides the idea of certification and auditing, but usually does not model how certification changes investment incentives among users, agents, and platforms.

This paper combines these elements around a specific control object: the user's ability to delegate account operation to an automated proxy. The central claim is limited. Delegation rights should not be confused with a general right to scrape, a right to unrestricted API access, or a transfer of platform ownership. They concern the mode of exercising existing account rights. Certified Delegation is proposed as a second-best response to that problem: it protects certified, user-authorized proxies from arbitrary exclusion while preserving the platform's right to refuse unsafe or non-compliant automation.

\section{Theoretical Framework}
\label{sec:model}

This section presents a simple incomplete-contracts model of account-level delegation. The model has three parties: a User, an AI Agent provider, and a Platform. The purpose is not to estimate a particular platform dispute, but to isolate how the allocation of control over automated account operation affects disagreement payoffs, bargaining positions, relationship-specific investment, and welfare.

The central control object is the mode of account execution. The user already holds an account entitlement. The question is whether that entitlement must be exercised manually or may be exercised through a user-authorized automated proxy. Platform Control, User Control, and Certified Delegation differ in how they allocate residual control over this execution mode.

\subsection{Players, Investments, and Timing}
\label{subsec:model_players}

There are three risk-neutral parties,
\[
\mathcal{N}=\{U,A,P\},
\]
where $U$ denotes the User, $A$ denotes the AI Agent provider, and $P$ denotes the Platform. The user holds an account on the platform. The agent can act as an automated proxy inside that account when delegation is permitted. The platform owns and maintains the infrastructure on which the account operates.

Before bargaining, each party chooses a non-contractible, relationship-specific investment:
\[
i_U\geq 0,\qquad i_A\geq 0,\qquad i_P\geq 0.
\]
The user's investment $i_U$ captures workflow design, preference specification, and monitoring of delegated tasks. The agent's investment $i_A$ captures platform-specific execution capability, including reliability and compliance-related adaptation. The platform's investment $i_P$ captures infrastructure, authentication, fraud prevention, and system stability.

Investment costs are quadratic:
\begin{equation}
C_U(i_U)=\frac{\kappa_U}{2}i_U^2,\qquad
C_A(i_A)=\frac{\kappa_A}{2}i_A^2,\qquad
C_P(i_P)=\frac{\kappa_P}{2}i_P^2,
\label{eq:costs}
\end{equation}
where $\kappa_U,\kappa_A,\kappa_P>0$.

The timing is as follows:
\begin{enumerate}
    \item \textbf{Governance choice.} A regime $\Omega$ assigns residual control over account-level proxy delegation.
    \item \textbf{Investment.} The three parties choose $(i_U,i_A,i_P)$. These investments are sunk before bargaining.
    \item \textbf{Bargaining.} The parties bargain over the surplus from automated account use. Payoffs are allocated according to the Shapley value.
    \item \textbf{Realization.} Production takes place and each party receives its net payoff.
\end{enumerate}

\subsection{Gross Surplus and Security Externality}
\label{subsec:model_surplus}

When all three parties cooperate, automated account use generates gross surplus
\begin{equation}
V(i_U,i_A,i_P)
=
B i_U^{\alpha_U}i_A^{\alpha_A}i_P^{\alpha_P},
\label{eq:gross_value}
\end{equation}
where $B>0$ and
\[
\alpha_U>0,\qquad
\alpha_A>0,\qquad
\alpha_P>0,\qquad
\alpha_U+\alpha_A+\alpha_P<1.
\]
The decreasing-returns assumption ensures a well-behaved interior optimum under the convex cost functions. The multiplicative specification captures complementarity among user adaptation, agent capability, and platform infrastructure.

Automated proxy use may also impose costs on the platform and on third parties. These costs may arise from privacy leakage, fraud risk, system load, or security vulnerabilities. We represent this externality by
\begin{equation}
K(i_A)
=
\frac{\rho_0}{1+\rho_1 i_A},
\label{eq:security_cost}
\end{equation}
where $\rho_0>0$ is baseline risk and $\rho_1>0$ measures the effectiveness of agent-side risk mitigation. Hence $K'(i_A)<0$ and $K''(i_A)>0$.

The net value of the grand coalition is
\begin{equation}
V_G(i_U,i_A,i_P)
=
V(i_U,i_A,i_P)-K(i_A).
\label{eq:grand_value}
\end{equation}

\subsection{Reduced-Form Interpretation of Agent Investment}
\label{subsec:reduced_form_agent_investment}

The agent's investment $i_A$ should be read as a reduced-form measure of platform-specific capability. It combines two margins. The first is productive capability, such as interface adaptation, workflow execution, and task reliability. The second is compliance capability, such as authorization controls, audit logs, rate-limit adherence, and data minimization.

This scalar treatment keeps the bargaining mechanism transparent. It also imposes an interpretive limitation. Because $i_A$ combines productive adaptation and compliance capability, the calibration cannot separately identify the production-recovery channel from the safety-compliance channel at the level of primitive investments. The numerical results should therefore be interpreted as illustrating the joint mechanism rather than decomposing distinct engineering margins. A richer model could separate productive capability from safety capability, but the present formulation is designed to isolate the residual-control mechanism with minimal notation.

\subsection{Control Regimes}
\label{subsec:model_control_regimes}

We compare two benchmark regimes:
\[
\Omega_P:\text{ Platform Control},
\qquad
\Omega_U:\text{ User Control}.
\]

Under \emph{Platform Control} ($\Omega_P$), the platform has residual authority over whether automated proxy execution is permitted. If bargaining fails, the platform can block machine-mediated account use. The User--Agent coalition then cannot realize the value of automated workflows inside the platform account.

Under \emph{User Control} ($\Omega_U$), the user has a protected delegation right. If bargaining fails, the User--Agent coalition can still operate through the automated proxy, either through direct account privileges, manual fallback, or other non-cooperative access channels.

The regime does not change the production technology $V(\cdot)$ or the externality function $K(\cdot)$. It changes the coalition values that determine bargaining payoffs.

\subsection{Coalition Values and Bargaining}
\label{subsec:model_bargaining}

The bargaining game assigns a value to each coalition of parties. These coalition values are reduced-form objects. They represent what each coalition can achieve if it forms without the excluded party. Table~\ref{tab:coalition_values_interpretation} summarizes the economic interpretation of the relevant coalitions.

\begin{table}[htbp]
\centering
\small
\caption{Economic Interpretation of Coalition Values}
\label{tab:coalition_values_interpretation}
\renewcommand{\arraystretch}{1.18}
\begin{tabularx}{\textwidth}{p{0.25\textwidth} X X}
\toprule
\textbf{Coalition} & \textbf{Economic Interpretation} & \textbf{Example} \\
\midrule
$U+P$ 
& Manual or platform-native account use without a third-party AI agent. The user and platform can generate value, but the agent's platform-specific automation capability is absent. 
& The user manually shops, books, searches, pays, or sends messages through the ordinary platform interface. \\
\addlinespace
$A+P$ 
& Platform-approved automation, embedded assistant services, developer integration, or back-end automation without the user's relationship-specific workflow investment. 
& A platform-integrated agent service, official API partnership, or platform-side automation tool. \\
\addlinespace
$U+A$ under $\Omega_P$ 
& No protected access to the account interface. The platform can block automated proxy execution, so the coalition value is normalized to zero. 
& A user-authorized agent is excluded because the platform refuses automated access. \\
\addlinespace
$U+A$ under $\Omega_U$ 
& Fallback or protected proxy execution without platform cooperation. The coalition can realize a reduced value through user-side access channels. 
& Browser automation, manual fallback supported by the agent, or another user-side execution channel. \\
\addlinespace
$U+A+P$ 
& Full cooperative delegated account use. The user, agent provider, and platform jointly realize the value of automated account operation while bearing the associated risk. 
& Certified delegation, negotiated integration, or another cooperative access arrangement. \\
\bottomrule
\end{tabularx}
\end{table}

The $A$--$P$ coalition deserves special clarification. It should not be read as account-level delegation without a user. Delegation, as defined in this paper, remains derivative of user authorization. Instead, $v_{AP}$ captures platform-approved automation or integration services that use agent capability without the user's relationship-specific workflow investment. Examples include an official API partnership, an embedded platform assistant, or a back-end automation service supplied by the agent provider to the platform. This interpretation keeps the $A$--$P$ coalition consistent with the paper's definition of delegation rights.

Singleton coalition values are normalized to zero. Under any regime $\Omega$, the value of the grand coalition is
\begin{equation}
v(\{U,A,P\}\mid \Omega)=V_G(i_U,i_A,i_P).
\label{eq:grand_coalition_value}
\end{equation}

Two-party coalitions involving the platform generate reduced surplus:
\begin{align}
v_{UP}(i_U,i_P)
&=
\beta_{UP}B i_U^{\alpha_U}i_P^{\alpha_P},
\qquad 0<\beta_{UP}<1,
\label{eq:up_value} \\
v_{AP}(i_A,i_P)
&=
\beta_{AP}B i_A^{\alpha_A}i_P^{\alpha_P},
\qquad 0<\beta_{AP}<1.
\label{eq:ap_value}
\end{align}
The parameters $\beta_{UP}$ and $\beta_{AP}$ capture the loss of trilateral complementarity when one party is absent.

The key regime-dependent term is the value of the User--Agent coalition:
\begin{equation}
v_{UA}^{\Omega}(i_U,i_A)
=
\begin{cases}
0, & \Omega=\Omega_P, \\[0.35em]
\lambda B i_U^{\alpha_U}i_A^{\alpha_A}, & \Omega=\Omega_U,
\end{cases}
\label{eq:ua_value}
\end{equation}
where $\lambda\in(0,1)$ measures the effectiveness of protected delegation, manual fallback, browser-side automation, or other non-cooperative user-side access channels when platform cooperation is absent.

Payoffs are allocated by the Shapley value. For each party $k\in\{U,A,P\}$,
\begin{equation}
\varphi_k^\Omega(i)
=
\sum_{S\subseteq \mathcal{N}\setminus\{k\}}
\frac{|S|!(2-|S|)!}{3!}
\left[
v(S\cup\{k\}\mid\Omega)-v(S\mid\Omega)
\right].
\label{eq:shapley_general}
\end{equation}
Substituting the coalition values gives
\begin{align}
\varphi_U^\Omega
&=
\frac{1}{6}
\left[
2V_G+v_{UA}^{\Omega}+v_{UP}-2v_{AP}
\right],
\label{eq:phi_u} \\
\varphi_A^\Omega
&=
\frac{1}{6}
\left[
2V_G+v_{UA}^{\Omega}+v_{AP}-2v_{UP}
\right],
\label{eq:phi_a} \\
\varphi_P^\Omega
&=
\frac{1}{6}
\left[
2V_G+v_{UP}+v_{AP}-2v_{UA}^{\Omega}
\right].
\label{eq:phi_p}
\end{align}

Each party chooses its investment to maximize its bargaining payoff net of investment cost:
\begin{equation}
R_k^\Omega(i)=\varphi_k^\Omega(i)-C_k(i_k),
\qquad k\in\{U,A,P\}.
\label{eq:private_payoff}
\end{equation}
An interior Nash equilibrium under regime $\Omega$ satisfies
\begin{equation}
\frac{\partial \varphi_k^\Omega(i)}{\partial i_k}
=
C_k'(i_k),
\qquad k\in\{U,A,P\}.
\label{eq:private_foc}
\end{equation}

As a benchmark, the first-best allocation solves
\begin{equation}
\max_{i_U,i_A,i_P\geq 0} W(i_U,i_A,i_P),
\label{eq:first_best_problem}
\end{equation}
where
\begin{equation}
W(i_U,i_A,i_P)
=
V(i_U,i_A,i_P)-K(i_A)
-C_U(i_U)-C_A(i_A)-C_P(i_P).
\label{eq:social_welfare}
\end{equation}
The first-order conditions are
\begin{align}
\frac{\partial V}{\partial i_U} &= C_U'(i_U),
\label{eq:fb_u} \\
\frac{\partial V}{\partial i_A} - K'(i_A) &= C_A'(i_A),
\label{eq:fb_a} \\
\frac{\partial V}{\partial i_P} &= C_P'(i_P).
\label{eq:fb_p}
\end{align}

\subsection{Regularity Assumptions}
\label{subsec:model_assumptions}

The comparative-static results below require standard regularity conditions. These assumptions are maintained throughout the theoretical analysis and are imposed only on the relevant parameter region considered by the model.

\begin{assumption}[Interiority]
\label{ass:interiority}
For each regime $\Omega\in\{\Omega_P,\Omega_U\}$, the investment game has an interior equilibrium
\[
i^\Omega=(i_U^\Omega,i_A^\Omega,i_P^\Omega)
\]
in the parameter region considered.
\end{assumption}

\begin{assumption}[Stability]
\label{ass:stability}
The equilibrium of the investment game is locally stable. Equivalently, small changes in marginal payoff functions induce small changes in the equilibrium investment vector rather than discontinuous jumps across equilibria.
\end{assumption}

\begin{assumption}[Strategic Complementarity]
\label{ass:strategic_complementarity}
The relevant surplus components exhibit non-negative cross-partial derivatives in the parameter region considered. In particular, increases in one party's relationship-specific investment weakly raise the marginal return to complementary investments by the other parties.
\end{assumption}

\begin{assumption}[Regularity]
\label{ass:regularity}
The equilibrium investment correspondence is single-valued and continuous in the parameter region considered. The welfare functions induced by equilibrium investments are also continuous in the relevant primitive parameters.
\end{assumption}

Assumption~\ref{ass:interiority} allows the investment comparison to be expressed through first-order conditions. Assumption~\ref{ass:stability} rules out unstable comparative statics. Assumption~\ref{ass:strategic_complementarity} is consistent with the multiplicative surplus specification and captures the idea that user adaptation, agent capability, and platform infrastructure are complements. Assumption~\ref{ass:regularity} ensures that local shifts in control rights can be mapped into local shifts in equilibrium investment and welfare.

\subsection{Platform Control and AI-Complementary Hold-Up}
\label{subsec:platform_holdup}

Under Platform Control, $v_{UA}^{\Omega_P}=0$. The User--Agent coalition therefore has no protected disagreement payoff if bargaining fails. Under User Control, by contrast,
\[
v_{UA}^{\Omega_U}(i_U,i_A)=\lambda B i_U^{\alpha_U}i_A^{\alpha_A}>0
\]
for any interior investment profile.

The first step is to compare marginal incentives across the two regimes.

\begin{lemma}[Delegation Access and User--Agent Marginal Incentives]
\label{lem:marginal_incentives}
For any interior investment profile, User Control raises the private marginal return to user and agent investment relative to Platform Control:
\begin{align}
\frac{\partial \varphi_U^{\Omega_U}}{\partial i_U}
-
\frac{\partial \varphi_U^{\Omega_P}}{\partial i_U}
&=
\frac{1}{6}
\frac{\partial v_{UA}^{\Omega_U}}{\partial i_U}
>0,
\label{eq:marginal_u_difference} \\
\frac{\partial \varphi_A^{\Omega_U}}{\partial i_A}
-
\frac{\partial \varphi_A^{\Omega_P}}{\partial i_A}
&=
\frac{1}{6}
\frac{\partial v_{UA}^{\Omega_U}}{\partial i_A}
>0.
\label{eq:marginal_a_difference}
\end{align}
\end{lemma}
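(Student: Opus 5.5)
The argument is a direct computation from the closed-form Shapley payoffs \eqref{eq:phi_u}--\eqref{eq:phi_a}. The key structural fact is that the regime $\Omega$ enters these payoffs only through the term $v_{UA}^{\Omega}$. The grand-coalition value $V_G$ is regime-independent by \eqref{eq:grand_coalition_value}, and so are $v_{UP}$ and $v_{AP}$ by \eqref{eq:up_value}--\eqref{eq:ap_value}. I would therefore take the difference of the payoffs across regimes before differentiating, rather than computing each derivative separately.

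\textbf{Step 1 (differences of payoffs).} Taking differences gives
\[
\varphi_U^{\Omega_U}-\varphi_U^{\Omega_P}=\tfrac{1}{6}\bigl(v_{UA}^{\Omega_U}-v_{UA}^{\Omega_P}\bigr)=\tfrac{1}{6}v_{UA}^{\Omega_U},
\]
using $v_{UA}^{\Omega_P}\equiv 0$ from \eqref{eq:ua_value}. The identical identity holds for $\varphi_A$. Since $v_{UA}^{\Omega_P}$ is identically zero as a function of $(i_U,i_A)$, not merely at a point, its derivatives also vanish. Differentiating the identity with respect to $i_U$ and $i_A$ respectively then yields the equalities in \eqref{eq:marginal_u_difference} and \eqref{eq:marginal_a_difference}.

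\textbf{Step 2 (signs).} I would compute the derivatives explicitly from $v_{UA}^{\Omega_U}=\lambda B i_U^{\alpha_U}i_A^{\alpha_A}$:
\[
\partial v_{UA}^{\Omega_U}/\partial i_U=\lambda B\alpha_U i_U^{\alpha_U-1}i_A^{\alpha_A},
\qquad
\partial v_{UA}^{\Omega_U}/\partial i_A=\lambda B\alpha_A i_U^{\alpha_U}i_A^{\alpha_A-1}.
\]
Both are strictly positive at any interior profile $i_U,i_A>0$, because $\lambda\in(0,1)$, $B>0$ and $\alpha_U,\alpha_A>0$.

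\textbf{Main obstacle.} There is no real obstacle; the only delicate point is interiority. At $i_U=0$ or $i_A=0$ the derivatives may vanish or fail to exist, since $\alpha_k-1<0$. This is why the statement restricts to interior profiles, and I would note it explicitly. It is also worth remarking that the comparison holds pointwise at the same investment profile, independent of equilibrium. The translation into equilibrium investment levels, which would use Assumptions~\ref{ass:interiority}--\ref{ass:regularity}, is not needed here.
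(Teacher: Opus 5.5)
Your proposal is correct and follows essentially the same route as the paper: the paper also observes that $v_{UA}^{\Omega}$ is the only regime-dependent term in the Shapley payoffs, that it is zero under Platform Control, and that it is positive and strictly increasing in $i_U$ and $i_A$ at interior profiles under User Control. Your explicit derivatives of $\lambda B i_U^{\alpha_U}i_A^{\alpha_A}$ and your remark on interiority simply spell out the sign step that the paper asserts in words.
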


\begin{proof}
The production technology, externality function, and platform-involving coalition values are the same under $\Omega_P$ and $\Omega_U$. The only term in equations~\eqref{eq:phi_u} and \eqref{eq:phi_a} that differs across the two regimes is $v_{UA}^{\Omega}$. Under Platform Control, this term is zero. Under User Control, it is positive and strictly increasing in $i_U$ and $i_A$ at any interior investment profile. Taking derivatives gives equations~\eqref{eq:marginal_u_difference} and \eqref{eq:marginal_a_difference}.
\end{proof}

Lemma~\ref{lem:marginal_incentives} establishes the local incentive effect. The next proposition translates this marginal comparison into an equilibrium comparison under the regularity assumptions.

\begin{proposition}[Platform Control and AI-Complementary Hold-Up]
\label{prop:platform_holdup}
Under Assumptions~\ref{ass:interiority}--\ref{ass:regularity}, Platform Control lowers the private marginal return to user and agent investment relative to User Control. In a stable interior equilibrium, this implies weakly lower equilibrium investment by the User--Agent coalition:
\[
i_U^{\Omega_P}\leq i_U^{\Omega_U},
\qquad
i_A^{\Omega_P}\leq i_A^{\Omega_U},
\]
with strict inequalities when the delegation-disagreement payoff is locally payoff relevant.
\end{proposition}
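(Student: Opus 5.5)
The first claim, that Platform Control lowers the private marginal return to user and agent investment, is Lemma~\ref{lem:marginal_incentives} read in reverse. I will cite it directly. The substantive part is the equilibrium comparison. I plan a monotone comparative statics argument in a parameter $t\in[0,1]$ that scales the disagreement payoff of the User--Agent coalition. Define the family of regimes $\Omega_t$ by
\[
v_{UA}^{t}=t\,\lambda B i_U^{\alpha_U}i_A^{\alpha_A}.
\]
Then $t=0$ is $\Omega_P$ and $t=1$ is $\Omega_U$. By equations~\eqref{eq:phi_u}--\eqref{eq:phi_p}, each $\varphi_k^{t}$ is affine in $t$. By Assumption~\ref{ass:regularity}, the equilibrium path $i(t)$ is single-valued and continuous. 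By Assumptions~\ref{ass:interiority} and~\ref{ass:stability}, it is characterized by the first-order conditions~\eqref{eq:private_foc} with a nonsingular, stable Jacobian. This lets me apply the implicit function theorem along the path.

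Differentiating the first-order conditions $F_k(i,t)=\partial\varphi_k^t/\partial i_k-C_k'(i_k)=0$ gives
\[
\frac{di}{dt}=-J^{-1}\,\partial_t F .
\]
The shock vector $\partial_t F$ equals $\big(\tfrac16\partial_{i_U}v_{UA}^{\Omega_U},\ \tfrac16\partial_{i_A}v_{UA}^{\Omega_U},\ 0\big)$ from the $U$ and $A$ rows. In the platform's row, $t$ enters through $-\tfrac{2}{6}v_{UA}^t$, which does not depend on $i_P$, so that entry is zero. The first two entries are strictly positive at interior profiles. Here I need $-J^{-1}$ to map nonnegative shocks into nonnegative responses.

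Concavity of own payoffs gives negative diagonal entries of $J$. Assumption~\ref{ass:strategic_complementarity} makes the off-diagonal entries nonnegative. Stability then makes $-J$ a nonsingular M-matrix, and the inverse of a nonsingular M-matrix is entrywise nonnegative. Hence $di_U/dt\ge0$, $di_A/dt\ge0$, and also $di_P/dt\ge 0$. Integrating from $t=0$ to $t=1$ gives $i_U^{\Omega_P}\le i_U^{\Omega_U}$ and $i_A^{\Omega_P}\le i_A^{\Omega_U}$. As an alternative that avoids differentiability, I would use a Topkis/Milgrom--Shannon argument. Each $R_k^t$ has increasing differences in $(i_k,t)$ by the lemma, and in $(i_k,i_{-k})$ by complementarity, so the unique equilibrium is monotone in $t$.

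For strictness, I interpret ``locally payoff relevant'' to mean that $\partial_{i_U}v_{UA}$ and $\partial_{i_A}v_{UA}$ are strictly positive along the path. This holds for interior profiles. The diagonal of the inverse of a nonsingular M-matrix is strictly positive, so $di_U/dt\ge (-J^{-1})_{UU}\,\partial_tF_U>0$, and likewise for $A$. The main obstacle is the sign-structure step. The cross-partials of $\varphi_U$ include $-\tfrac{2}{6}\partial^2 v_{AP}/\partial i_U\partial i_A=0$, but $\varphi_P$ carries $-2v_{UA}$, whose cross-partials with respect to $i_P$ are zero. Some off-diagonal terms could in principle be negative, for instance $\partial^2\varphi_A/\partial i_A\partial i_P$ involves $2V_{AP}-2\cdot 0+v_{AP,AP}$, which is fine, but the general signs need checking. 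I will check each entry from the Cobb--Douglas forms and rely on Assumption~\ref{ass:strategic_complementarity} wherever the sign is not automatic.
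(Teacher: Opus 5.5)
Your argument is correct and follows the paper's route. Lemma~\ref{lem:marginal_incentives} supplies the shift in the $U$ and $A$ marginal returns, Assumption~\ref{ass:strategic_complementarity} rules out reversal through cross-investment effects, and stability allows the interior equilibria to be compared through the first-order conditions. The difference is that the paper only asserts this comparative-statics step, whereas your homotopy with the M-matrix inverse actually carries it out; so does the Topkis version, which needs only uniqueness at $t=0$ and $t=1$.

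Two small points. First, the sign check you flag is automatic: each $-2v$ term in $\varphi_k$ is independent of $i_k$, so every off-diagonal Jacobian entry is a strictly positive Cobb--Douglas cross-partial. Second, $\Omega_t$ is just $\Omega_U$ with $\lambda$ replaced by $t\lambda$, so invoking interiority and stability along the whole path is a mild extension of assumptions the paper states only for the two endpoint regimes.
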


\begin{proof}
By Lemma~\ref{lem:marginal_incentives}, removing the protected User--Agent disagreement payoff shifts down the marginal payoff functions for $i_U$ and $i_A$. Assumption~\ref{ass:strategic_complementarity} ensures that these lower marginal returns are not reversed through negative cross-investment effects. Assumptions~\ref{ass:interiority}, \ref{ass:stability}, and \ref{ass:regularity} allow the comparison of stable interior equilibria through the first-order conditions in equation~\eqref{eq:private_foc}. Therefore, the equilibrium investments of the user and the agent are weakly lower under Platform Control than under User Control. The inequalities are strict when $v_{UA}^{\Omega_U}$ has a strictly positive local marginal effect on the relevant payoff.
\end{proof}

The relation to the first-best benchmark should be interpreted with care. The formal comparison above is between Platform Control and User Control. Relative to the first-best allocation, both regimes generally distort relationship-specific investment because each party captures only a fraction of the marginal surplus while bearing the full private cost of investment. Platform Control exacerbates this distortion for the User--Agent coalition by eliminating the delegation-based disagreement payoff.

\subsection{User Control and Safety Externalities}
\label{subsec:user_control_externalities}

User Control strengthens the bargaining position of the User--Agent coalition, but it does not necessarily internalize all risks created by automated account operation. The social marginal return to agent investment includes both its productive effect and its effect on expected risk:
\[
\frac{\partial V}{\partial i_A}-K'(i_A).
\]
The agent's private payoff, however, places only fractional weight on the risk-reduction component. From equation~\eqref{eq:phi_a}, the term $-K'(i_A)$ enters the agent's payoff through the grand-coalition component rather than as a fully internalized benefit.

\begin{proposition}[User Control and Safety Externalities]
\label{prop:user_control_externalities}
User Control raises the private marginal returns to user and agent investment relative to Platform Control and therefore mitigates the hold-up problem. However, absent perfectly offsetting transfers, liability rules, or certification requirements, User Control generally does not implement the first-best incentive to reduce safety, privacy, congestion, or third-party risks.
\end{proposition}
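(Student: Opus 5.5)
The proposition has two parts, and I would prove them separately. The first part is that User Control mitigates hold-up. This follows directly from Lemma~\ref{lem:marginal_incentives} together with Proposition~\ref{prop:platform_holdup}. The lemma shows that the marginal returns $\partial\varphi_U/\partial i_U$ and $\partial\varphi_A/\partial i_A$ rise by $\frac{1}{6}\partial v_{UA}^{\Omega_U}/\partial i_k>0$. Under Assumptions~\ref{ass:interiority}--\ref{ass:regularity}, the proposition turns this into weakly higher equilibrium investment. Nothing new is needed here beyond citing those two results.

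The substantive part is the second claim: that User Control does not implement the first best. I would compare the agent's private first-order condition under $\Omega_U$ with the social one in \eqref{eq:fb_a}. Differentiating \eqref{eq:phi_a} and using $V_G=V-K$, the agent's private marginal return is
\[
\frac{\partial\varphi_A^{\Omega_U}}{\partial i_A}=\frac{1}{3}\Big(\frac{\partial V}{\partial i_A}-K'(i_A)\Big)+\frac{1}{6}\frac{\partial v_{UA}^{\Omega_U}}{\partial i_A}+\frac{1}{6}\frac{\partial v_{AP}}{\partial i_A}-\frac{1}{3}\frac{\partial v_{UP}}{\partial i_A}.
\]
Since $v_{UP}$ does not depend on $i_A$, the last term vanishes. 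The key observation is that $K$ enters only through $V_G$. So the agent internalizes exactly one third of the marginal risk reduction $-K'(i_A)>0$, while the planner internalizes all of it.

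The gap between social and private risk-reduction incentives is therefore $\frac{2}{3}(-K'(i_A))>0$ at every interior profile. No regime-dependent coalition term touches $K$, because $v_{UA}^{\Omega_U}$ and $v_{AP}$ are specified without $K$. This is the precise sense in which the risk component is under-internalized.

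To conclude that the first best is not implemented, I would argue by contradiction. Suppose the first-best profile $i^{FB}$ were an equilibrium under $\Omega_U$. Then the agent's first-order condition \eqref{eq:private_foc} must coincide with \eqref{eq:fb_a} at $i^{FB}$. That requires
\[
\frac{2}{3}\frac{\partial V}{\partial i_A}=\frac{1}{6}\Big(\frac{\partial v_{UA}^{\Omega_U}}{\partial i_A}+\frac{\partial v_{AP}}{\partial i_A}\Big)+\frac{2}{3}\big(-K'(i_A)\big).
\]
The user's first-order condition must also coincide with \eqref{eq:fb_u}, which gives a similar equality. With the Cobb--Douglas forms these become knife-edge equalities in $(\lambda,\beta_{AP},\rho_0,\rho_1,B,\alpha)$. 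They hold only on a measure-zero set of parameters. For example, perturbing $\rho_0$ changes the $K'$ term but leaves the remaining terms of the equation unchanged, so the equality breaks. This delivers ``generally does not implement.''

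I would add one remark. Even where hold-up-driven underinvestment might be offset by the $v_{UA}$ and $v_{AP}$ terms, the offset is not targeted at the risk margin. So any coincidence is accidental rather than structural, which is why the statement excludes ``perfectly offsetting transfers.''

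The main obstacle is making ``generally'' rigorous. A perturbation argument shows failure on a dense open set of parameters, but it does not sign the overall investment gap. Private $i_A$ could even exceed first best if $v_{UA}$ and $v_{AP}$ are large, since each coalition term contributes $\frac{1}{6}$ while $V$ contributes only $\frac{1}{3}$. So I would state the conclusion as the wedge $\frac{2}{3}(-K'(i_A))$ in the risk-reduction component plus generic non-coincidence of the equilibria. I would not claim uniform underinvestment in safety.
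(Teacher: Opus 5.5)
Your core argument is the paper's. The paper also takes the first part from Lemma~\ref{lem:marginal_incentives} and the second from comparing \eqref{eq:private_foc} with \eqref{eq:fb_a}, noting that $K$ enters $\varphi_A^{\Omega_U}$ only through the grand-coalition term. Your explicit computation is correct: the agent internalizes exactly $\tfrac{1}{3}$ of $-K'(i_A)$, leaving a wedge of $\tfrac{2}{3}\bigl(-K'(i_A)\bigr)$ at every interior profile. That is precisely the ``bargaining-weighted component'' the paper invokes, and at the paper's level of rigor your proof is complete.

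The genericity step you add on top, which the paper does not attempt, has two slips.

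First, a sign error. Equating the two first-order conditions gives $\tfrac{2}{3}\bigl(\partial V/\partial i_A - K'(i_A)\bigr)=\tfrac{1}{6}\bigl(\partial v_{UA}^{\Omega_U}/\partial i_A+\partial v_{AP}/\partial i_A\bigr)$. The risk term therefore enters your displayed equation with the wrong sign.

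Second, and more substantively, perturbing $\rho_0$ does not leave the other terms unchanged. The condition is evaluated at $i^{FB}$, and $i^{FB}$ solves \eqref{eq:fb_a}, which contains $K'$ and hence $\rho_0$; through complementarity all three first-best investments shift. So every term in your equation moves with $\rho_0$, and you have not shown that the equality breaks.

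The fix is to perturb $\lambda$ or $\beta_{AP}$ instead. Neither appears in the planner's problem \eqref{eq:first_best_problem}, so $i^{FB}$ stays fixed. Meanwhile the right-hand side is linear with strictly positive slope in each parameter at any interior profile, for instance slope $\tfrac{1}{6}\alpha_A B i_U^{\alpha_U} i_A^{\alpha_A-1}$ in $\lambda$. Hence, holding the other parameters fixed, at most one value of $\lambda$ makes the first-best profile an equilibrium under $\Omega_U$. That is the knife-edge claim you want.
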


\begin{proof}
The increase in productive investment incentives follows directly from Lemma~\ref{lem:marginal_incentives}. The externality distortion follows from comparing the agent's private first-order condition in equation~\eqref{eq:private_foc} with the planner's first-order condition in equation~\eqref{eq:fb_a}. The planner places full weight on the marginal risk-reduction benefit $-K'(i_A)$. The agent receives only the bargaining-weighted component of that benefit through the grand-coalition value. Unless transfers, liability, or certification rules make the agent internalize the remaining risk, User Control does not generally implement the first-best safety incentive.
\end{proof}

Thus, User Control addresses the platform hold-up problem but leaves a residual externality problem. This motivates a conditional mechanism in which protected delegation depends on verifiable risk control.

\subsection{Welfare Decomposition}
\label{subsec:welfare_decomposition}

Let
\[
i^{\Omega_P} = (i_U^{\Omega_P},i_A^{\Omega_P},i_P^{\Omega_P})
\]
and
\[
i^{\Omega_U} = (i_U^{\Omega_U},i_A^{\Omega_U},i_P^{\Omega_U})
\]
denote the equilibrium investment vectors under Platform Control and User Control. Welfare under regime $\Omega$ is
\begin{equation}
W(\Omega)
=
V(i_U^\Omega,i_A^\Omega,i_P^\Omega)
-
K(i_A^\Omega)
-
C_U(i_U^\Omega)
-
C_A(i_A^\Omega)
-
C_P(i_P^\Omega).
\label{eq:regime_welfare}
\end{equation}

Define the welfare difference
\[
\Delta W = W(\Omega_U)-W(\Omega_P).
\]
To separate the main channels, introduce the counterfactual vector
\[
\widetilde{i} = (i_U^{\Omega_U},i_A^{\Omega_U},i_P^{\Omega_P}).
\]
Then
\begin{equation}
\Delta W
=
\Delta W^{UA}
+
\Delta W^{P}
+
\Delta W^{K}.
\label{eq:welfare_decomposition}
\end{equation}
The three components are
\begin{align}
\Delta W^{UA}
&=
\left[ V(\widetilde{i})-V(i^{\Omega_P}) \right]
-
\left[
C_U(i_U^{\Omega_U})+C_A(i_A^{\Omega_U})
- C_U(i_U^{\Omega_P})-C_A(i_A^{\Omega_P})
\right],
\label{eq:delta_w_ua} \\
\Delta W^{P}
&=
\left[ V(i^{\Omega_U})-V(\widetilde{i}) \right]
-
\left[
C_P(i_P^{\Omega_U})-C_P(i_P^{\Omega_P})
\right],
\label{eq:delta_w_p} \\
\Delta W^{K}
&=
-
\left[
K(i_A^{\Omega_U})-K(i_A^{\Omega_P})
\right].
\label{eq:delta_w_k}
\end{align}

The term $\Delta W^{UA}$ captures the welfare effect of changing user and agent investment. The term $\Delta W^{P}$ captures the effect of any change in platform investment. The term $\Delta W^{K}$ captures the change in expected risk. This decomposition is useful because the sign of $\Delta W$ is not determined by a single force. User Control may improve investment incentives but worsen risk internalization; Platform Control may reduce risk exposure but depress relationship-specific investment.

\subsection{Contribution-Threshold Logic as a Regime-Map Heuristic}
\label{subsec:contribution_threshold}

Define the User--Agent contribution index as
\begin{equation}
\Gamma=\alpha_U+\alpha_A.
\label{eq:gamma}
\end{equation}
A larger $\Gamma$ means that more of the surplus depends on user adaptation and agent capability. A larger $\alpha_P$ means that platform infrastructure is more important. Baseline risk is governed by $\rho_0$, and the effectiveness of outside options is governed by $\lambda$.

The model does not imply a universal closed-form threshold between Platform Control and User Control. Instead, it suggests a local contribution-threshold logic. User Control is more attractive when user and agent investments are central to value creation and when non-cooperative workarounds are effective. Platform Control is more attractive when platform infrastructure, identity protection, privacy, or third-party risks dominate the welfare calculation.

The following result formalizes the local threshold intuition rather than providing a global characterization.

\begin{proposition}[Local Regime-Map Heuristic]
\label{prop:contribution_threshold}
Fix $\alpha_P$, $\rho_0$, $\rho_1$, $\lambda$, and the investment cost parameters. Suppose the welfare difference $\Delta W(\Gamma)$ is continuous and satisfies a local single-crossing property around a point of indifference. Then there exists a local threshold
\[
\Gamma^*(\alpha_P,\rho_0,\rho_1,\lambda)
\]
such that User Control is locally welfare-superior when $\Gamma>\Gamma^*$, while Platform Control is locally welfare-superior when $\Gamma<\Gamma^*$.
\end{proposition}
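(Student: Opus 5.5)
The plan is to treat $\Delta W(\Gamma)=W(\Omega_U)-W(\Omega_P)$ as a real-valued function of the single index $\Gamma=\alpha_U+\alpha_A$, with $\alpha_P,\rho_0,\rho_1,\lambda$ and the cost parameters held fixed. To make this a genuine one-dimensional object, we fix a parametrization of $(\alpha_U,\alpha_A)$ along which $\Gamma$ varies, for instance a fixed split $\alpha_U=s\Gamma$, $\alpha_A=(1-s)\Gamma$ with $\Gamma<1-\alpha_P$. We then note that Assumption~\ref{ass:regularity} makes the equilibrium vectors $i^{\Omega_P}(\Gamma)$ and $i^{\Omega_U}(\Gamma)$ single-valued and continuous. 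Since $W$ in equation~\eqref{eq:regime_welfare} is a composition of continuous functions of these vectors and of $\Gamma$, $\Delta W(\Gamma)$ is continuous. This recovers the continuity hypothesis of the statement from the maintained assumptions, although the statement in any case assumes it directly.

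The core step is elementary real analysis. Let $\Gamma^*$ be the point of indifference, so $\Delta W(\Gamma^*)=0$. We read the local single-crossing hypothesis as the existence of a neighborhood $(\Gamma^*-\varepsilon,\Gamma^*+\varepsilon)$ on which $\Delta W(\Gamma)<0$ for $\Gamma<\Gamma^*$ and $\Delta W(\Gamma)>0$ for $\Gamma>\Gamma^*$; that is, the crossing is from below. We then set $\Gamma^*(\alpha_P,\rho_0,\rho_1,\lambda)$ equal to this root and record its dependence on the fixed parameters. The conclusion follows immediately from the definition of $\Delta W$: $\Delta W>0$ means User Control is welfare-superior, and $\Delta W<0$ means Platform Control is. If single-crossing is stated only as "changes sign at most once," continuity together with $\Delta W(\Gamma^*)=0$ pins down the two sides, and the orientation must come from the next step.

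The main obstacle is justifying the direction of the crossing, that is, showing that welfare favors User Control on the high-$\Gamma$ side rather than the reverse. I would argue this through the decomposition~\eqref{eq:welfare_decomposition}. By Proposition~\ref{prop:platform_holdup}, $i_U^{\Omega_U}\ge i_U^{\Omega_P}$ and $i_A^{\Omega_U}\ge i_A^{\Omega_P}$. The term $\Delta W^{UA}$ is driven by the marginal products of $i_U,i_A$, which scale with $\alpha_U,\alpha_A$, and by the wedge $\tfrac16\,\partial v_{UA}^{\Omega_U}$ identified in Lemma~\ref{lem:marginal_incentives}, which also grows with $\Gamma$. With $\alpha_P$ and $\rho_0$ fixed, the terms $\Delta W^{P}$ and $\Delta W^{K}$ capture the platform- and risk-side effects that do not benefit from a larger $\Gamma$. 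So I would try to show that $\partial\Delta W/\partial\Gamma>0$ at $\Gamma^*$, using the envelope-type observation that welfare gains from closing an underinvestment gap are increasing in the elasticity of $V$ with respect to the underinvested inputs. If this derivative bound proves intractable in closed form, the fallback is to take the crossing orientation as part of the local single-crossing hypothesis, which the statement permits. The result is then a direct consequence of continuity and sign definiteness on each side of $\Gamma^*$, and it is explicitly local, with no claim of global uniqueness of $\Gamma^*$.
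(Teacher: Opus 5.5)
Your proof is correct and is essentially the paper's argument: continuity of $\Delta W$ via Assumption~\ref{ass:regularity}, the assumed local single crossing at the indifference point, and the direction read off from the sign of $\Delta W$ on each side. The paper does not derive the orientation either, so your fallback of building it into the hypothesis is exactly what its proof does, and your fixed-split parametrization usefully makes $\Delta W(\Gamma)$ well defined where the paper is silent.

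You can drop the decomposition attempt at $\partial\Delta W/\partial\Gamma>0$. In the baseline model, concavity of $W$, the componentwise ordering $i^{\Omega_U}\geq i^{\Omega_P}$ (Proposition~\ref{prop:platform_holdup} plus complementarity for $i_P$), and underinvestment at the User-Control equilibrium already give $\Delta W>0$ outright; for instance, $\Delta W^{K}\geq 0$ simply because $K$ is decreasing. So the platform and risk terms are not countervailing there, and any crossing must come from extensions such as the moral-hazard variant used for the regime map.
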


\begin{proof}
By Assumption~\ref{ass:regularity}, equilibrium investments and induced welfare are continuous in the relevant primitive parameters. Hence $\Delta W(\Gamma)$ is continuous in the parameter region considered. If $\Delta W(\Gamma)$ satisfies a local single-crossing property around a point of indifference, the existence of a local threshold follows from the Intermediate Value Theorem. The direction of the local comparison follows from the sign of $\Delta W(\Gamma)$ on each side of the crossing.
\end{proof}

This result should be read as a regime-map heuristic. It does not establish a global welfare ranking, and it does not claim that the boundary is invariant across functional forms or institutional environments. Its role is to discipline the interpretation of the numerical regime maps: environments with high user-agent contribution and effective outside options tend to favor stronger delegation protection, while environments with high platform centrality and high third-party risk tend to favor stronger platform control. The conditional mechanism introduced below is motivated by precisely this lack of a universal polar-regime ranking.

\section{Certified Delegation Mechanism}
\label{sec:certified}

The previous section shows that the two benchmark regimes generate different distortions. Platform Control can weaken user and agent investment by giving the platform a strong ex-post veto over automated account operation. User Control reduces this hold-up problem, but it may leave some security, privacy, congestion, and third-party costs outside the agent's private objective. This section introduces a conditional regime, \emph{Certified Delegation}, designed to address both margins.

Certification has two effects. First, it screens or disciplines unsafe automation by imposing a verifiable risk standard. Second, and more importantly for the property-rights analysis, it changes the platform's residual right of refusal. Once the agent satisfies the standard, the platform can no longer exclude the proxy merely because execution is automated. If the agent fails the standard, the platform retains the right to refuse access. Certified Delegation is therefore not ordinary certification in the narrow technical sense. It is \emph{access-protecting certification}: compliance changes the allocation of residual control over the account interface.

Under Certified Delegation, a user-authorized agent receives protection against exclusion only if it satisfies verifiable requirements for authorization, revocability, auditability, risk control, and accountability. A platform may still refuse access to uncertified or non-compliant agents. The point of the mechanism is not to create an unconditional right to automate platform use. It is to distinguish between unsafe automation, which remains excludable, and certified delegation, which receives a protected access path.

As in Section~\ref{subsec:reduced_form_agent_investment}, the agent's investment $i_A$ is treated as a reduced-form measure of platform-specific capability. It includes both productive capability, such as interface adaptation and task execution, and compliance capability, such as authorization controls, audit logs, rate-limit adherence, and data minimization. This scalar specification keeps the bargaining mechanism transparent. It also limits the interpretation of the numerical exercises. Because $i_A$ combines productive adaptation and compliance capability, the calibration cannot separately identify the production-recovery channel from the safety-compliance channel at the level of primitive investments. The numerical results should therefore be interpreted as illustrating the joint mechanism rather than decomposing distinct engineering margins. A richer two-investment model would separate productive capability from safety capability; the present scalar specification is used to keep the residual-control mechanism tractable.

\subsection{Mechanism Design and the Compliance Threshold}
\label{subsec:cert-mechanism-threshold}

Let $\bar{K}>0$ denote a publicly announced risk threshold. A lower value of $\bar{K}$ corresponds to a stricter certification standard. The Certified Delegation regime is denoted by $\Omega_{\mathrm{Cert}}(\bar{K})$.

The rule is simple. If the agent satisfies the risk standard, $K(i_A)\leq \bar{K}$, the effective regime is User Control. If the agent fails the standard, $K(i_A)>\bar{K}$, the effective regime is Platform Control. Formally,
\begin{equation}
\Omega_{\mathrm{eff}}(i_A;\bar{K})=\Omega_U
\quad \text{if} \quad
K(i_A)\leq \bar{K},
\label{eq:cert-effective-user}
\end{equation}
and
\begin{equation}
\Omega_{\mathrm{eff}}(i_A;\bar{K})=\Omega_P
\quad \text{if} \quad
K(i_A)>\bar{K}.
\label{eq:cert-effective-platform}
\end{equation}

This switching rule is the core institutional feature of the mechanism. When the standard is met, the platform's discretion to refuse automation is limited. When the standard is not met, the platform retains residual control over exclusion. Certification therefore operates as a conditional property-rights rule rather than merely as a technical label.

Using the risk function
\[
K(i_A)=\frac{\rho_0}{1+\rho_1 i_A},
\]
the condition $K(i_A)\leq \bar{K}$ can be written as a minimum investment requirement:
\begin{equation}
i_A^c(\bar{K})
=
\max
\left\{
0,
\frac{\rho_0/\bar{K}-1}{\rho_1}
\right\},
\qquad
0<\bar{K}\leq \rho_0.
\label{eq:cert-threshold}
\end{equation}
Thus, the agent qualifies for access protection if and only if
\[
i_A\geq i_A^c(\bar{K}).
\]

The scalar threshold $\bar{K}$ summarizes a bundle of observable compliance requirements. These may include explicit user authorization, revocation, scope limitation, audit logs, rate-limit compliance, data minimization, incident reporting, and liability rules. Once these conditions are met, the platform cannot refuse access solely because the operator is automated. If the conditions are not met, the platform may refuse the proxy in order to protect its infrastructure, users, counterparties, and third parties.

\subsection{Agent Incentive Compatibility}
\label{subsec:cert-agent-ic}

Certification changes the agent's investment problem because access protection begins only after the threshold is reached. Let $F_c\geq 0$ denote a fixed certification cost. Let $c_c(i_A)$ denote variable compliance costs, with
\[
c_c'(i_A)\geq 0,
\qquad
c_c''(i_A)\geq 0.
\]
For given user and platform investments $(i_U,i_P)$, an uncertified agent remains under Platform Control. Its best payoff is
\begin{equation}
R_A^{P*}
=
\max_{i_A<i_A^c(\bar{K})}
\left[
\varphi_A^{\Omega_P}(i_U,i_A,i_P)-C_A(i_A)
\right].
\label{eq:cert-best-uncertified}
\end{equation}

Certification is privately attractive if the payoff from meeting the threshold is at least as large as the best payoff from remaining uncertified:
\begin{equation}
\varphi_A^{\Omega_U}(i_U,i_A^c,i_P)
-
C_A(i_A^c)
-
F_c
-
c_c(i_A^c)
\geq
R_A^{P*}.
\label{eq:cert-agent-ic}
\end{equation}

It is useful to define the delegation premium at the threshold:
\begin{equation}
\Delta_A^c(\bar{K})
=
\varphi_A^{\Omega_U}(i_U,i_A^c,i_P)
-
\varphi_A^{\Omega_P}(i_U,i_A^c,i_P).
\label{eq:cert-premium-definition}
\end{equation}
Using the Shapley payoffs derived above, this premium is
\begin{equation}
\Delta_A^c(\bar{K})
=
\frac{1}{6}
v_{UA}^{\Omega_U}(i_U,i_A^c).
\label{eq:cert-premium-shapley}
\end{equation}
Under the baseline production specification, this becomes
\begin{equation}
\Delta_A^c(\bar{K})
=
\frac{1}{6}
\lambda B i_U^{\alpha_U}(i_A^c)^{\alpha_A}.
\label{eq:cert-premium-closed}
\end{equation}

The delegation premium is a property-rights premium. It arises because certification changes the effective control regime from Platform Control to User Control. A higher $\lambda$ increases the value of the protected User--Agent outside option, while higher fixed or variable certification costs make certification less attractive.

\begin{proposition}[Threshold Investment Equilibrium]
\label{prop:threshold_investment}
Suppose the certification threshold is binding relative to the agent's unconstrained User-Control choice, so that $i_A^c(\bar{K})>i_A^{\Omega_U*}$. If the incentive-compatibility condition in equation~\eqref{eq:cert-agent-ic} holds, the agent chooses to certify. If, in addition, the certified payoff is locally concave and the unconstrained optimum lies below the threshold, the agent's optimal certified investment is
\[
i_A^{\mathrm{Cert}*}=i_A^c(\bar{K}).
\]
\end{proposition}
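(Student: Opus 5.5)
The agent's problem under $\Omega_{\mathrm{Cert}}(\bar K)$ is a piecewise objective in $i_A$, and I would prove the proposition by comparing the best payoff on each piece. Fix $(i_U,i_P)$. For $i_A<i_A^c(\bar K)$ the effective regime is $\Omega_P$ by equation~\eqref{eq:cert-effective-platform}, so the agent's payoff is $\varphi_A^{\Omega_P}-C_A$. Its supremum over this region is $R_A^{P*}$ in equation~\eqref{eq:cert-best-uncertified}. For $i_A\geq i_A^c(\bar K)$ the effective regime is $\Omega_U$ by equation~\eqref{eq:cert-effective-user}, and the payoff is
\[
\Pi^c(i_A)=\varphi_A^{\Omega_U}(i_U,i_A,i_P)-C_A(i_A)-F_c-c_c(i_A).
\]
The agent's optimum is therefore $\max\{R_A^{P*},\sup_{i_A\geq i_A^c}\Pi^c(i_A)\}$. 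The first claim follows at once: if equation~\eqref{eq:cert-agent-ic} holds, then $\Pi^c(i_A^c)\geq R_A^{P*}$, so some point in the certified region weakly dominates every uncertified choice, and the agent certifies. If ties are broken in favour of certification, this is a weak preference. One caveat: $R_A^{P*}$ is taken over the open set $i_A<i_A^c$ and may be a supremum that is not attained. Since equation~\eqref{eq:cert-agent-ic} is stated against that supremum, the comparison still goes through.

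The second step is to show that within the certified region the constrained optimum is the corner $i_A^c$. The hypotheses say that $\Pi^c$ is (locally) concave and that its unconstrained maximizer lies below the threshold. I will read this as the maximizer of $\Pi^c$ itself, which is at most $i_A^{\Omega_U*}<i_A^c$. The reason is that adding $c_c$, with $c_c'\geq 0$, only lowers the marginal return relative to the User-Control objective $\varphi_A^{\Omega_U}-C_A$, so the maximizer cannot move up. By concavity, $\Pi^{c\prime}(i_A)\leq 0$ for every $i_A$ at or beyond that maximizer, and in particular on all of $[i_A^c,\infty)$. Hence $\Pi^c$ is nonincreasing there and is maximized at the left endpoint, $i_A^{\mathrm{Cert}*}=i_A^c(\bar K)$. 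Combined with the first step, this point is the agent's global best response.

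The main obstacle is that the proposition only assumes \emph{local} concavity, while the corner argument needs $\Pi^{c\prime}\leq 0$ on the whole half-line $[i_A^c,\infty)$. To close this gap I would verify global concavity of $\Pi^c$ from the primitives. The term $\varphi_A^{\Omega_U}$ is a positive combination of $V$, $v_{UA}^{\Omega_U}$, and $v_{AP}$, minus $K/3$, plus terms constant in $i_A$. Each of $V$, $v_{UA}^{\Omega_U}$, and $v_{AP}$ is concave in $i_A$ because $\alpha_A<1$. The term $-K/3$ is concave because $K''>0$. Finally, $-C_A$ and $-c_c$ are concave because $c_c''\geq 0$. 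So $\Pi^c$ is in fact globally concave in $i_A$ for fixed $(i_U,i_P)$, and the local hypothesis upgrades to a global one.
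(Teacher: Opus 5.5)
Your proof is correct and follows the same route as the paper's: the incentive-compatibility condition makes meeting the threshold weakly dominate every uncertified choice (whose payoff is bounded by $R_A^{P*}$), and concavity together with an unconstrained optimum below $i_A^c(\bar{K})$ places the certified optimum at the corner $i_A^c(\bar{K})$. Your extra step, which verifies from the primitives that the certified payoff is globally concave in $i_A$ for fixed $(i_U,i_P)$, is a welcome tightening: the paper's proof appeals only to local concavity, and local concavity alone would not rule out a better certified choice farther above the threshold.
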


\begin{proof}
If the agent does not meet the threshold, its payoff is bounded above by $R_A^{P*}$. When equation~\eqref{eq:cert-agent-ic} holds, meeting the certification threshold weakly dominates remaining uncertified. If the threshold is binding and further increases in $i_A$ do not change the effective control regime, local concavity implies that the agent chooses the lowest investment that satisfies certification. Hence $i_A^{\mathrm{Cert}*}=i_A^c(\bar{K})$.
\end{proof}

The incentive constraint also shows when certification may fail. A low value of $\lambda$ reduces the access premium, while high fixed or variable compliance costs make certification less attractive. In those cases, the second-best rule may require additional instruments, such as access-fee adjustments, liability credits, compliance bonds, or targeted subsidies for verifiable risk mitigation.

\subsection{Participation and Authorization Constraints}
\label{subsec:cert-participation-authorization}

For Certified Delegation to be feasible, the main parties must prefer it to their relevant outside options. The platform's participation constraint can be written as
\begin{equation}
R_P^{\mathrm{Cert}}+T_P
\geq
R_P^{\Omega_P}-L_P,
\label{eq:cert-platform-pc}
\end{equation}
where $T_P$ denotes transfers or fees received by the platform, such as access fees or audit-related payments. The term $L_P$ denotes the cost of resisting automated delegation, including legal costs, regulatory exposure, reputational losses, or user churn.

The user's participation constraint is
\begin{equation}
R_U^{\mathrm{Cert}}-\ell_U(\bar{K})
\geq
R_U^{\mathrm{Manual}},
\label{eq:cert-user-pc}
\end{equation}
where $R_U^{\mathrm{Manual}}$ is the user's payoff from manual account use. The term $\ell_U(\bar{K})$ captures user-side frictions from certification, such as authorization steps, configuration costs, and monitoring effort.

Certification of the agent is not sufficient by itself. The proxy must also act under explicit user authorization, remain revocable, and stay within the scope of the user's account rights. These conditions keep delegation distinct from account transfer, independent scraping, unauthorized access, or unrestricted API access. A certified proxy loses protection when it exceeds the user's authorization, ignores revocation, violates rate limits, bypasses security systems, or exposes third-party data outside the authorized task scope.

\subsection{Welfare Optimization and the Second-Best Standard}
\label{subsec:cert-welfare}

Certified Delegation improves welfare over Platform Control when the gains from reducing hold-up and lowering risk exceed the costs of certification. Let $\chi(\bar{K})$ denote administrative, monitoring, and dispute-resolution costs. A sufficient condition for welfare improvement is
\begin{equation}
\Delta W_{\mathrm{Cert}}^{UA}
+
\Delta W_{\mathrm{Cert}}^{P}
+
\Delta W_{\mathrm{Cert}}^{K}
>
F_c
+
c_c(i_A^{\mathrm{Cert}})
+
\chi(\bar{K}),
\label{eq:cert-welfare-condition}
\end{equation}
where $\Delta W_{\mathrm{Cert}}^{UA}$ captures the recovery of user and agent investment, $\Delta W_{\mathrm{Cert}}^{P}$ captures the effect on platform investment, and $\Delta W_{\mathrm{Cert}}^{K}$ captures the change in expected risk.

Because $i_A$ is a composite capability variable, the welfare decomposition should be read at the level of the model rather than as a primitive engineering decomposition. In the scalar specification, the same investment variable supports both productive adaptation and compliance capability. The welfare condition therefore captures the joint benefit of access-protecting certification: it can restore delegation-related investment incentives while requiring the agent to satisfy a verifiable risk standard.

\begin{proposition}[Welfare-Optimal Certification]
\label{prop:welfare_optimal_cert}
Suppose there exists a non-empty set of thresholds $\mathcal{K}$ satisfying the welfare condition in equation~\eqref{eq:cert-welfare-condition} and the participation constraints in equations~\eqref{eq:cert-platform-pc} and~\eqref{eq:cert-user-pc}. Then a welfare-improving Certified Delegation regime exists. The second-best threshold solves
\begin{equation}
\bar{K}^* \in \arg\max_{\bar{K} \in \mathcal{K}} W(\Omega_{\mathrm{Cert}}(\bar{K})).
\label{eq:optimal-kbar}
\end{equation}
\end{proposition}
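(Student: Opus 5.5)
The argument has two parts: existence follows almost directly from the hypothesis, while the content of the statement is that the argmax in \eqref{eq:optimal-kbar} is attained.

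For existence, pick any $\bar{K}\in\mathcal{K}$. By the definition of $\mathcal{K}$, inequality \eqref{eq:cert-welfare-condition} holds at $\bar{K}$. The left-hand side is the decomposition \eqref{eq:welfare_decomposition} applied to the pair $(\Omega_{\mathrm{Cert}}(\bar{K}),\Omega_P)$, evaluated at the equilibrium investments $i^{\mathrm{Cert}}$ and $i^{\Omega_P}$ with the counterfactual vector $\widetilde{i}=(i_U^{\mathrm{Cert}},i_A^{\mathrm{Cert}},i_P^{\Omega_P})$. It therefore telescopes to the gross welfare difference $W(\Omega_{\mathrm{Cert}}(\bar{K}))-W(\Omega_P)$, computed before certification costs. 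Subtracting $F_c+c_c(i_A^{\mathrm{Cert}})+\chi(\bar{K})$ shows that net welfare under certification strictly exceeds $W(\Omega_P)$.

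Feasibility also holds. The participation constraints \eqref{eq:cert-platform-pc}--\eqref{eq:cert-user-pc} are satisfied by hypothesis. When the threshold binds, Proposition~\ref{prop:threshold_investment} pins down the agent's behaviour as $i_A^{\mathrm{Cert}*}=i_A^c(\bar{K})$, provided \eqref{eq:cert-agent-ic} holds. I will state explicitly that \eqref{eq:cert-agent-ic} is to be read as part of the membership condition for $\mathcal{K}$. Otherwise the effective regime at $\bar{K}$ reverts to $\Omega_P$ and the welfare computation above does not apply. With that reading, $\Omega_{\mathrm{Cert}}(\bar{K})$ is an implementable welfare-improving regime, which proves the first claim.

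For the second claim, the problem is to show the supremum of $W(\Omega_{\mathrm{Cert}}(\bar{K}))$ over $\mathcal{K}$ is attained.

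\begin{itemize}
\item \textbf{Continuity.} I will show $\bar{K}\mapsto W(\Omega_{\mathrm{Cert}}(\bar{K}))$ is continuous on the relevant domain. The map $\bar{K}\mapsto i_A^c(\bar{K})$ in \eqref{eq:cert-threshold} is continuous on $(0,\rho_0]$. Under Assumption~\ref{ass:regularity}, the best responses $(i_U,i_P)$ to a fixed $i_A$ are continuous. Combined with continuity of $V$, $K$, $C_k$, $c_c$ and (to be assumed) $\chi$, this makes welfare continuous in $\bar{K}$.
\item \textbf{Compactness.} I will show that the thresholds worth considering lie in a compact set. Thresholds with $\bar{K}$ near $0$ force $i_A^c\to\infty$. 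Since $C_A$ is quadratic while $V$ has exponents summing to less than one, welfare tends to $-\infty$ there, so these thresholds can be excluded. Restricting to $[\underline{K},\rho_0]$ then gives compactness, provided the constraint set is closed. The welfare inequality is strict and defines an open set. I will therefore maximize over the closure of $\mathcal{K}$ intersected with the set where the participation constraints and \eqref{eq:cert-agent-ic} hold weakly, which is closed by continuity.
\item \textbf{Conclusion.} Weierstrass then gives a maximizer $\bar{K}^*$. It must lie in $\mathcal{K}$ itself, because its welfare is at least that of any element of $\mathcal{K}$ and hence exceeds $W(\Omega_P)$ strictly.
\end{itemize}

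The main obstacle is continuity, because of the regime switch. At the boundary where \eqref{eq:cert-agent-ic} holds with equality, the agent is indifferent between certifying and remaining uncertified. The equilibrium can jump between $\Omega_U$ and $\Omega_P$ there, so welfare need not be continuous in $\bar{K}$. I will handle this by selecting the certifying branch at ties, which makes welfare upper semicontinuous on the closed constraint set. Upper semicontinuity is enough for a maximum to exist on a compact set. If that selection cannot be justified, I will fall back to reading \eqref{eq:optimal-kbar} as a supremum, attained whenever $\mathcal{K}$ is finite or compact.
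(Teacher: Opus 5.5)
Your argument is sound under the conventions you adopt, but it proves more than the paper does and takes a different route. The paper's proof is essentially definitional. Non-emptiness of $\mathcal{K}$ is taken to deliver the result, and the second-best standard is simply \emph{defined} as a welfare maximizer over $\mathcal{K}$. Attainment of the $\arg\max$ is never discussed, and improvement is asserted only ``whenever the maximized value over $\mathcal{K}$ exceeds the relevant polar-regime benchmark.''

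You do two things the paper does not. First, you obtain improvement over $\Omega_P$ pointwise at every $\bar K\in\mathcal{K}$, by reading the left side of \eqref{eq:cert-welfare-condition} as the telescoped decomposition. Second, you actually prove that $\bar K^*$ exists: welfare $\to-\infty$ as $\bar K\to 0$, and Weierstrass applies on the remaining compact range. The limit claim is correct, since even $\max_{i_U,i_P}[V-C_U-C_P]$ grows like $i_A^{2\alpha_A/(2-\alpha_U-\alpha_P)}$, slower than $C_A$.

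This buys a genuine existence result that the paper silently presumes. The price is a set of hypotheses the statement does not contain, some of which you leave implicit:
\begin{itemize}
\item continuity in $\bar K$ of $\chi$, $c_c$, $\ell_U$ and $T_P$, needed for the participation set to be closed;
\item a single-valued, continuous $(i_U,i_P)$ response to a fixed $i_A$ under $\Omega_U$ payoffs (Assumption~\ref{ass:regularity} is stated for the polar-regime equilibria, not for this subgame);
\item $\mathcal{K}$ being the \emph{maximal} feasible set rather than merely ``a non-empty set'';
\item the maximand being welfare net of $F_c+c_c+\chi$.
\end{itemize}
The last one matters for your final step. If $W(\Omega_{\mathrm{Cert}}(\bar K))$ is the gross object of \eqref{eq:regime_welfare}, a maximizer over the closure can sit on the boundary where the strict inequality \eqref{eq:cert-welfare-condition} fails. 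Your conclusion that the maximizer lies in $\mathcal{K}$ would then not go through.

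Two further points. First, folding \eqref{eq:cert-agent-ic} into membership of $\mathcal{K}$ strengthens the hypothesis rather than interpreting it. The paper evaluates certified welfare at the threshold response $i_A^c(\bar K)$, presuming compliance. Its own numerical optimum $\bar K^*=0.0305$ lies above the agent's self-enforcing boundary and is to be supported by fees, bonds or subsidies. Under your convention that threshold is excluded, so your $\bar K^*$ is an IC-constrained second best, a different object from the paper's. Second, your ``main obstacle'' dissolves. On the closed set where \eqref{eq:cert-agent-ic} holds weakly and ties go to certifying, welfare coincides with the certifying-branch welfare. Your first bullet already makes that continuous, so you need neither upper semicontinuity nor the supremum fallback.
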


\begin{proof}
The regulator or certifier chooses $\bar{K}$ taking into account the agent's threshold response $i_A^c(\bar{K})$, the costs of compliance, and the participation constraints of the user and the platform. If the feasible set $\mathcal{K}$ is non-empty, maximizing welfare over this set gives the second-best standard. A welfare-improving certified regime exists whenever the maximized value over $\mathcal{K}$ exceeds the relevant polar-regime benchmark.
\end{proof}

The trade-off is straightforward. If the standard is too lenient, Certified Delegation approaches User Control and leaves too much residual risk. If the standard is too strict, compliance becomes costly and the agent may not certify, in which case the regime approaches Platform Control. The second-best threshold balances these two distortions.

\subsection{Imperfect Certification and Strategic Risk Margins}
\label{subsec:cert-imperfect}

The baseline mechanism assumes that risk $K(i_A)$ is observed without error. In practice, certification is noisy. Suppose the certifier observes
\begin{equation}
z=K(i_A)+\varepsilon,
\label{eq:cert-noisy-signal}
\end{equation}
where $\varepsilon$ is an independent error term with distribution function $F(\cdot)$ and density $f(\cdot)$. The certifier grants access protection if the probability of satisfying the risk standard is at least $p^* \in (0,1)$:
\begin{equation}
\Pr(z\leq \bar{K}\mid i_A)\geq p^*.
\label{eq:cert-probability-rule}
\end{equation}
This is equivalent to
\begin{equation}
K(i_A)\leq \bar{K}-F^{-1}(p^*).
\label{eq:cert-effective-threshold}
\end{equation}
Thus, when the certifier requires a high confidence level, the effective standard becomes stricter.

Let
\[
\pi(i_A)=F(\bar{K}-K(i_A))
\]
denote the probability that the agent obtains certification. The agent's expected payoff is
\begin{equation}
\operatorname{E}[R_A^{\mathrm{Cert}}]
=
\pi(i_A)
\left[
\varphi_A^{\Omega_U}(i_U,i_A,i_P)-F_c-c_c(i_A)
\right]
+
\left[
1-\pi(i_A)
\right]
\varphi_A^{\Omega_P}(i_U,i_A,i_P)
-
C_A(i_A).
\label{eq:cert-expected-payoff}
\end{equation}
The effect of investment on the probability of certification is
\begin{equation}
\pi'(i_A)
=
f(\bar{K}-K(i_A))
\left[-K'(i_A)\right]
>0.
\label{eq:cert-probability-derivative}
\end{equation}

Investment therefore raises the chance of receiving access protection. With noisy certification, however, false positives and false negatives are unavoidable. False positives allow unsafe agents to obtain protected access and therefore call for monitoring, liability, and suspension rules after certification. False negatives deny protection to qualified agents and therefore call for appeal, recertification, and review procedures. These institutional safeguards are necessary because certification changes legal-economic access status rather than merely assigning a technical score.

\subsection{Screening Heterogeneous Agent Populations}
\label{subsec:cert-screening}

Certified Delegation can also screen heterogeneous agents. Let the agent's type be $\theta>0$, where higher $\theta$ means lower marginal cost of capability or compliance. The agent's cost is
\begin{equation}
C_A(i_A;\theta)
=
\frac{\kappa_A}{2\theta}i_A^2.
\label{eq:cert-type-cost}
\end{equation}
For a type-$\theta$ agent, define the net gain from certification as
\begin{equation}
D(\theta;\bar{K})
=
\left[
\varphi_A^{\Omega_U}(i_U,i_A^c,i_P)
-
C_A(i_A^c;\theta)
-
F_c
-
c_c(i_A^c)
\right]
-
R_A^{P*}(\theta).
\label{eq:cert-screening-gain}
\end{equation}

\begin{proposition}[Adverse Selection Sorting]
\label{prop:screening_sorting}
If $D(\theta;\bar{K})$ is strictly increasing in $\theta$, there exists a cutoff type $\theta^*(\bar{K})$ such that the agent certifies if and only if
\[
\theta\geq \theta^*(\bar{K}).
\]
\end{proposition}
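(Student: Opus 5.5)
The agent's decision rule is that it certifies whenever $D(\theta;\bar{K})\geq 0$, meaning the certified payoff weakly exceeds the best uncertified payoff $R_A^{P*}(\theta)$. This is the type-indexed version of the incentive-compatibility condition~\eqref{eq:cert-agent-ic}, with ties broken toward certification as in Proposition~\ref{prop:threshold_investment}. The plan is to read the proposition as a statement about the upper contour set $\{\theta>0: D(\theta;\bar{K})\geq 0\}$. Strict monotonicity of $D$ in $\theta$ will force that set to be an upper interval.

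First, fix $\bar{K}$ and treat $(i_U,i_P)$ as given. Then $i_A^c(\bar{K})$ from~\eqref{eq:cert-threshold} does not depend on $\theta$, so the only $\theta$-dependence in $D$ comes from $-C_A(i_A^c;\theta)=-\kappa_A (i_A^c)^2/(2\theta)$ and from $-R_A^{P*}(\theta)$. Monotonicity is assumed, so I will not re-derive it. I would only remark why it is natural: by the envelope theorem, $R_A^{P*}$ rises at rate $\kappa_A (i_A^{P})^2/(2\theta^2)$, where $i_A^P<i_A^c$ is the uncertified optimum. The certified branch rises at the faster rate $\kappa_A (i_A^c)^2/(2\theta^2)$, which gives a single-crossing condition.

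Second, define the cutoff as $\theta^*(\bar{K})=\inf\{\theta>0: D(\theta;\bar{K})\geq 0\}$. If $\theta\geq\theta^*$ and $D(\theta^*)\geq 0$, then $D(\theta)\geq D(\theta^*)\geq 0$. If $\theta<\theta^*$, then $D(\theta)<0$ by the definition of the infimum. Conversely, certification at $\theta$ means $D(\theta)\geq 0$, which implies $\theta\geq\theta^*$. Together these give the stated equivalence.

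The main obstacle is the boundary behaviour of this construction. The infimum must actually be attained, and the set could be empty or all of $(0,\infty)$. For attainment, I would invoke continuity of $D$ in $\theta$. Continuity holds because $C_A(\cdot;\theta)$ is continuous, and $R_A^{P*}$ is continuous by Berge's maximum theorem, applied to the continuous objective over the fixed constraint set $[0,i_A^c)$; if needed I would use its closure together with Assumption~\ref{ass:regularity}. Continuity makes $\{D\geq 0\}$ closed in $(0,\infty)$, so the infimum is attained whenever it is positive. For the degenerate cases, I would adopt the conventions $\theta^*=0$ if every type certifies and $\theta^*=+\infty$ if no type does. Under these conventions the ``if and only if'' still holds, and the cutoff is interior exactly when $D$ changes sign, in which case $D(\theta^*;\bar{K})=0$ pins it down.
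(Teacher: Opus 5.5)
Your proof is correct, and it takes a more general route than the paper's. The paper assumes there is an interior type with $D(\theta;\bar{K})=0$ and invokes the Intermediate Value Theorem. That appeal would require continuity of $D$ and a sign change, neither of which the paper verifies. It then reads the sorting off strict monotonicity, and says nothing about environments where $D$ keeps a constant sign.

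You instead define $\theta^*(\bar{K})$ as the infimum of the upper contour set $\{\theta>0: D(\theta;\bar{K})\geq 0\}$, and use monotonicity only to show that this set is an upper interval. You use continuity only to ensure that the boundary type itself certifies, so the equivalence holds with the weak inequality. Your continuity argument is sound: Berge's theorem applies on the compact closure $[0,i_A^c]$, and the maximum there equals the supremum over $[0,i_A^c)$ by continuity in $i_A$.

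Your approach buys three things. It proves the existence claim exactly as stated, with no sign-change hypothesis. It handles the corner cases $\theta^*=0$ and $\theta^*=+\infty$ explicitly. It also shows that weak monotonicity already gives the interval structure, with strictness mattering only for uniqueness of the zero. The paper's version is shorter and gives the characterization $D(\theta^*;\bar{K})=0$ directly, which you recover as the interior case.

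One caution on your heuristic remark: the envelope comparison needs the uncertified optimum to satisfy $i_A^P<i_A^c$. For types efficient enough that the Platform-Control optimum reaches $i_A^c$, the supremum is approached at the threshold and the two rates coincide. There $D$ is flat, so the remark motivates the monotonicity hypothesis only on a bounded range of types. Since you take monotonicity as given, this does not affect the proof.
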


\begin{proof}
Strict monotonicity of $D(\theta;\bar{K})$ implies that the net gain from certification is higher for more efficient agents. If there is an interior type at which $D(\theta;\bar{K})=0$, the Intermediate Value Theorem gives a cutoff $\theta^*(\bar{K})$. Agents above the cutoff certify, while agents below it do not.
\end{proof}

This sorting result is useful only if the certification standard is tied to genuine risk and compliance requirements. If the standard is made unnecessarily costly, opaque, discriminatory, or platform-controlled without review, certification can become an entry barrier rather than a safety screen. The access-protecting character of certification therefore requires procedural safeguards: published standards, non-discriminatory review, evidence-based refusal, appeal rights, and periodic updating.

\subsection{Summary of Mechanism Dynamics}
\label{subsec:cert-summary}

Certified Delegation works through three channels. First, it reduces hold-up by giving compliant agents a predictable access path. Second, it limits risk by conditioning access protection on a verifiable standard. Third, it screens agents by making certification more attractive to providers with lower compliance costs.

The key point is that certification changes both risk and rights. It lowers expected harm by imposing compliance requirements, but it also reallocates residual control over the account interface. A certified, user-authorized, scope-limited, and revocable proxy receives protection against exclusion merely on the ground that execution is automated. An uncertified or non-compliant proxy remains subject to platform refusal. The mechanism improves welfare when the gains from restoring delegation-related investment and bounding residual risk exceed the costs of certification, monitoring, and enforcement.

\section{Illustrative Numerical Analysis and Counterfactual Simulations}
\label{sec:calibration}

This section provides an illustrative numerical analysis of the model. The purpose is not to estimate the welfare effects of any actual dispute. The purpose is to illustrate how the theoretical channels operate under transparent parameter choices. The simulations should therefore be read as mechanism illustrations rather than empirical or structural estimates.

The numerical exercises have three roles. First, they show how Platform Control, User Control, and Certified Delegation differ in equilibrium investment, risk, welfare, and deadweight loss under a maintained parameter profile. Second, they illustrate how the preferred polar regime can vary across environments depending on the relative importance of user-agent contribution, platform contribution, outside-option effectiveness, and baseline risk. Third, they show how a certification-contingent access rule can improve welfare by restoring delegation-related investment while bounding residual risk.

As discussed in Section~\ref{subsec:reduced_form_agent_investment}, the agent's investment $i_A$ is a reduced-form composite. It includes both productive capability, such as platform-specific adaptation and workflow execution, and compliance capability, such as authorization controls, auditability, rate-limit adherence, and data minimization. The numerical analysis therefore cannot separately identify productive engineering investment from safety-compliance investment at the level of primitive choices. The results should be interpreted as illustrating the joint mechanism generated by access-protecting certification.

\subsection{Maintained Parameter Profile}
\label{subsec:calib-design}

The baseline parameter profile satisfies decreasing returns to scale:
\begin{equation}
\alpha_U+\alpha_A+\alpha_P=0.75<1.
\label{eq:calib-decreasing-returns}
\end{equation}
Table~\ref{tab:calib-baseline} reports the maintained parameter values used in the baseline numerical exercise.

\begin{table}[htbp]
\centering
\small
\caption{Maintained Parameter Profile}
\label{tab:calib-baseline}
\begin{tabular}{llc}
\toprule
Symbol & Parameter Description & Maintained Value \\
\midrule
$\alpha_U$ & User output elasticity & 0.20 \\
$\alpha_A$ & AI Agent output elasticity & 0.30 \\
$\alpha_P$ & Platform output elasticity & 0.25 \\
$\kappa_U$ & User investment cost coefficient & 1.00 \\
$\kappa_A$ & Agent investment cost coefficient & 1.20 \\
$\kappa_P$ & Platform investment cost coefficient & 0.80 \\
$\rho_0$ & Baseline system vulnerability & 0.15 \\
$\rho_1$ & Risk-mitigation technological efficiency & 2.00 \\
$\beta_{UP}$ & User--Platform sub-coalition factor & 0.60 \\
$\beta_{AP}$ & Agent--Platform sub-coalition factor & 0.30 \\
$\lambda$ & Non-cooperative workaround efficiency & 0.25 \\
$B$ & Global production scale parameter & 10.0 \\
\bottomrule
\end{tabular}
\end{table}

The parameter values are not estimated from a specific platform environment. They are maintained values chosen to make the model's mechanisms transparent. The table therefore defines the numerical environment used for illustration; it should not be interpreted as an empirical calibration of any particular platform, agent provider, or dispute.

\subsection{Baseline Mechanism Comparison}
\label{subsec:baseline-equilibria}

Table~\ref{tab:calib-equilibria} reports the equilibrium investment vectors, gross output, expected risk costs, social welfare, and deadweight losses under four allocations: the first-best benchmark, Platform Control ($\Omega_P$), User Control ($\Omega_U$), and Certified Delegation evaluated at the welfare-maximizing standard $\bar{K}^*$.

\begin{table}[htbp]
\centering
\small
\setlength{\tabcolsep}{4pt} 
\caption{Baseline Mechanism Comparison}
\label{tab:calib-equilibria}
\begin{tabularx}{\textwidth}{l *{4}{>{\centering\arraybackslash}X}}
\toprule
\textbf{Economic Metric} & \textbf{First-Best} & \textbf{Platform Control} ($\Omega_P$) & \textbf{User Control} ($\Omega_U$) & \textbf{Certified Delegation} ($\Omega_{\mathrm{Cert}}(\bar{K}^*)$) \\
\midrule
User Investment ($i_U^*$) & 1.75 & 0.88 & 0.92 & 1.01 \\
\addlinespace
Agent Investment ($i_A^*$) & 1.89 & 1.01 & 1.08 & 1.96 \\
\addlinespace
Platform Investment ($i_P^*$) & 2.35 & 1.16 & 1.18 & 1.30 \\
\addlinespace
Gross Output ($V^*$) & 12.39 & 9.46 & 9.65 & 11.52 \\
\addlinespace
Expected Risk Cost ($K^*$) & 0.020 & 0.043 & 0.041 & 0.020 \\
\addlinespace
Aggregate Welfare ($W^*$) & 10.90 & 8.78 & 8.99 & 9.86 \\
\addlinespace
Deadweight Loss (\% of $W^{FB}$) & --- & 19.5\% & 17.5\% & 9.5\% \\
\bottomrule
\end{tabularx}
\end{table}

Table~\ref{tab:calib-equilibria} should not be read as an empirical estimate of any platform environment. It shows that, under the maintained parameter profile, Platform Control mainly loses welfare through depressed User--Agent investment, while Certified Delegation recovers part of that investment by conditioning access protection on a risk threshold. In this numerical environment, agent investment falls to $i_A^*(\Omega_P)=1.01$ under Platform Control, compared with the first-best value $i_A^{FB}=1.89$. User Control raises agent investment only slightly, to $i_A^*(\Omega_U)=1.08$. Certified Delegation generates a larger change because the threshold induces the agent to meet the risk standard. Agent investment rises to $i_A^*(\Omega_{\mathrm{Cert}})=1.96$, and deadweight loss falls from 19.5\% under Platform Control to 9.5\% under Certified Delegation.

\begin{figure}[htbp]
\centering
\includegraphics[width=\textwidth]{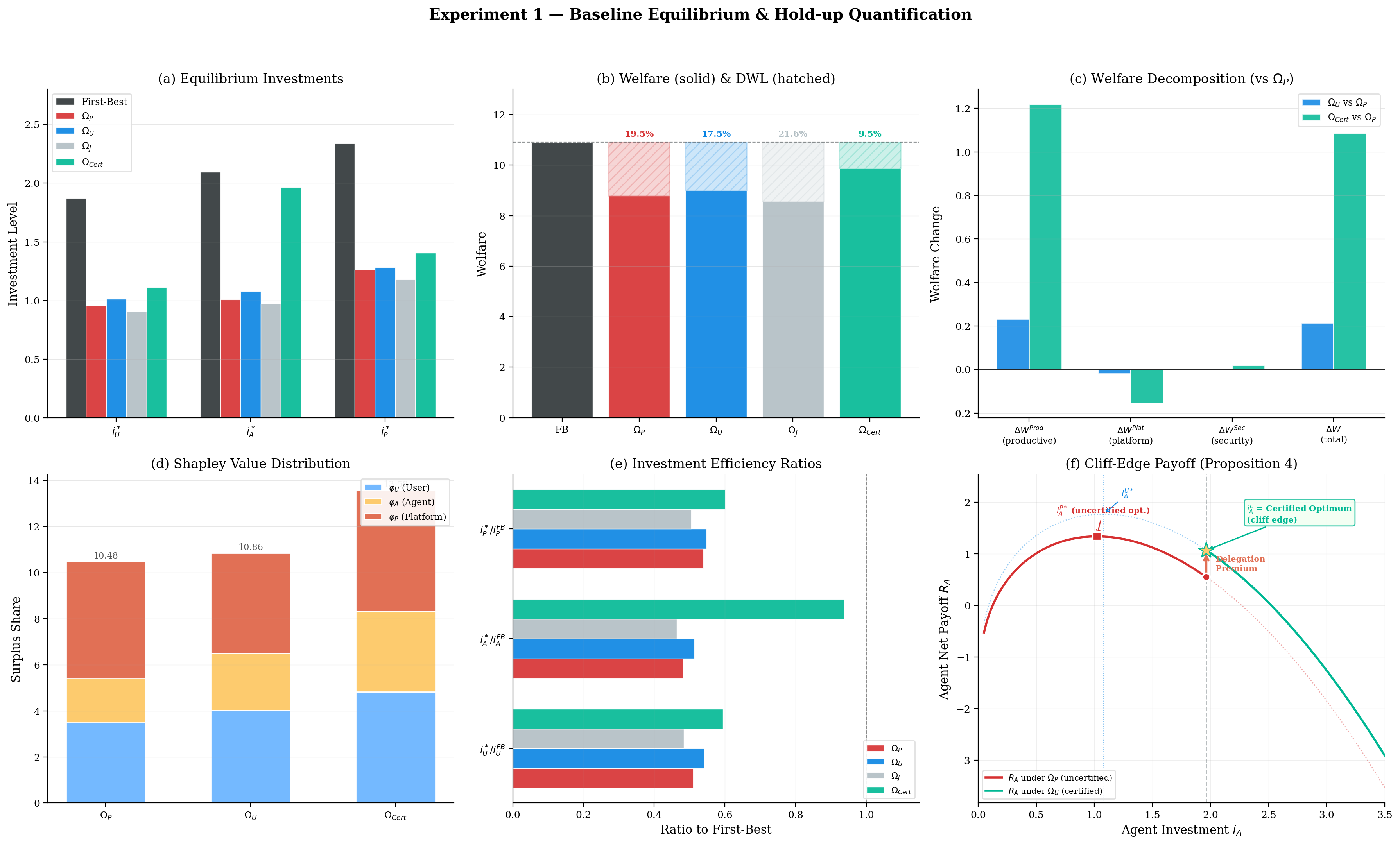}
\caption{Baseline diagnostics across control regimes.}
\label{fig:calib-baseline-diagnostics}
\end{figure}

Figure~\ref{fig:calib-baseline-diagnostics} summarizes the investment, welfare, bargaining, compliance-cost, and risk-frontier diagnostics for the maintained parameter profile. The figure illustrates the same mechanism reported in Table~\ref{tab:calib-equilibria}; it does not establish a global ranking of regimes outside the parameter region considered.

\subsection{Regime Map in Elasticity Space}
\label{subsec:contribution-regime-map}

We next vary the agent and platform output elasticities, $(\alpha_A,\alpha_P)$, to illustrate how the preferred polar regime changes across environments. For this exercise only, we introduce a simple moral-hazard extension: under User Control, only a fraction $\mu=0.15$ of agent investment is directed toward security compliance. We also use a higher baseline risk value, $\rho_0=0.50$. This extension is used only for the regime-map visualization and is not used in the baseline mechanism comparison or the case counterfactuals.

The grid contains $91\times 91$ points. In this numerical domain, User Control is welfare-dominant at 3,737 points, or about 54\% of the grid. Platform Control is welfare-dominant at 3,166 points, or about 46\% of the grid. The estimated indifference boundary slopes upward: as the platform elasticity $\alpha_P$ increases, a higher agent elasticity $\alpha_A$ is needed for User Control to remain welfare-superior.

\begin{figure}[htbp]
\centering
\includegraphics[width=0.88\textwidth]{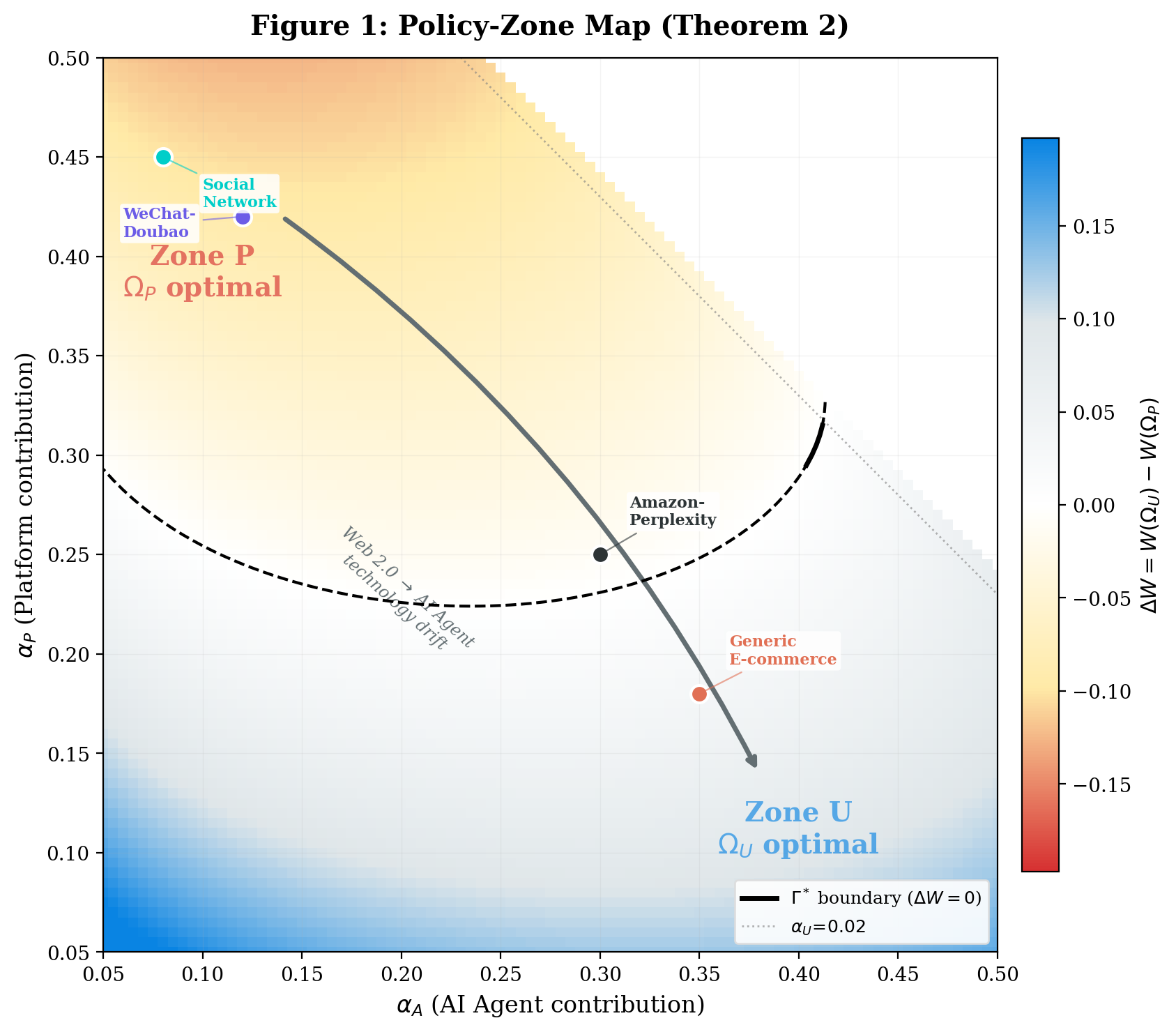}
\caption{Regime map in elasticity space.}
\label{fig:calib-regime-map}
\end{figure}

Figure~\ref{fig:calib-regime-map} should be read as a parametric illustration. It shows how the welfare-dominant polar regime varies across the $(\alpha_A,\alpha_P)$ grid under the specified extension. It does not imply a universal closed-form threshold, and the location of the boundary depends on the maintained functional forms and parameter values.

\subsection{Welfare Properties of the Certification Threshold}
\label{subsec:certification-threshold}

We now examine the Certified Delegation threshold. The risk standard is varied over
\begin{equation}
\bar{K}\in[0.003,0.149],
\label{eq:calib-threshold-range}
\end{equation}
using 500 grid points. Welfare under $\Omega_{\mathrm{Cert}}(\bar{K})$ follows an inverted-U pattern in the maintained numerical environment. The welfare-maximizing standard is
\begin{equation}
\bar{K}^*=0.0305,
\qquad
W^*=9.86.
\label{eq:calib-threshold-optimum}
\end{equation}
For comparison,
\begin{equation}
W(\Omega_P)=8.78,
\qquad
W(\Omega_U)=8.99.
\label{eq:calib-polar-welfare}
\end{equation}
Certified Delegation outperforms both polar regimes for thresholds in the interval
\begin{equation}
[\underline{\bar{K}},\overline{\bar{K}}]
=
[0.022,0.047].
\label{eq:calib-dominance-interval}
\end{equation}
This interval has width 0.025, or about 17\% of the threshold domain considered here.

\begin{figure}[htbp]
\centering
\includegraphics[width=0.88\textwidth]{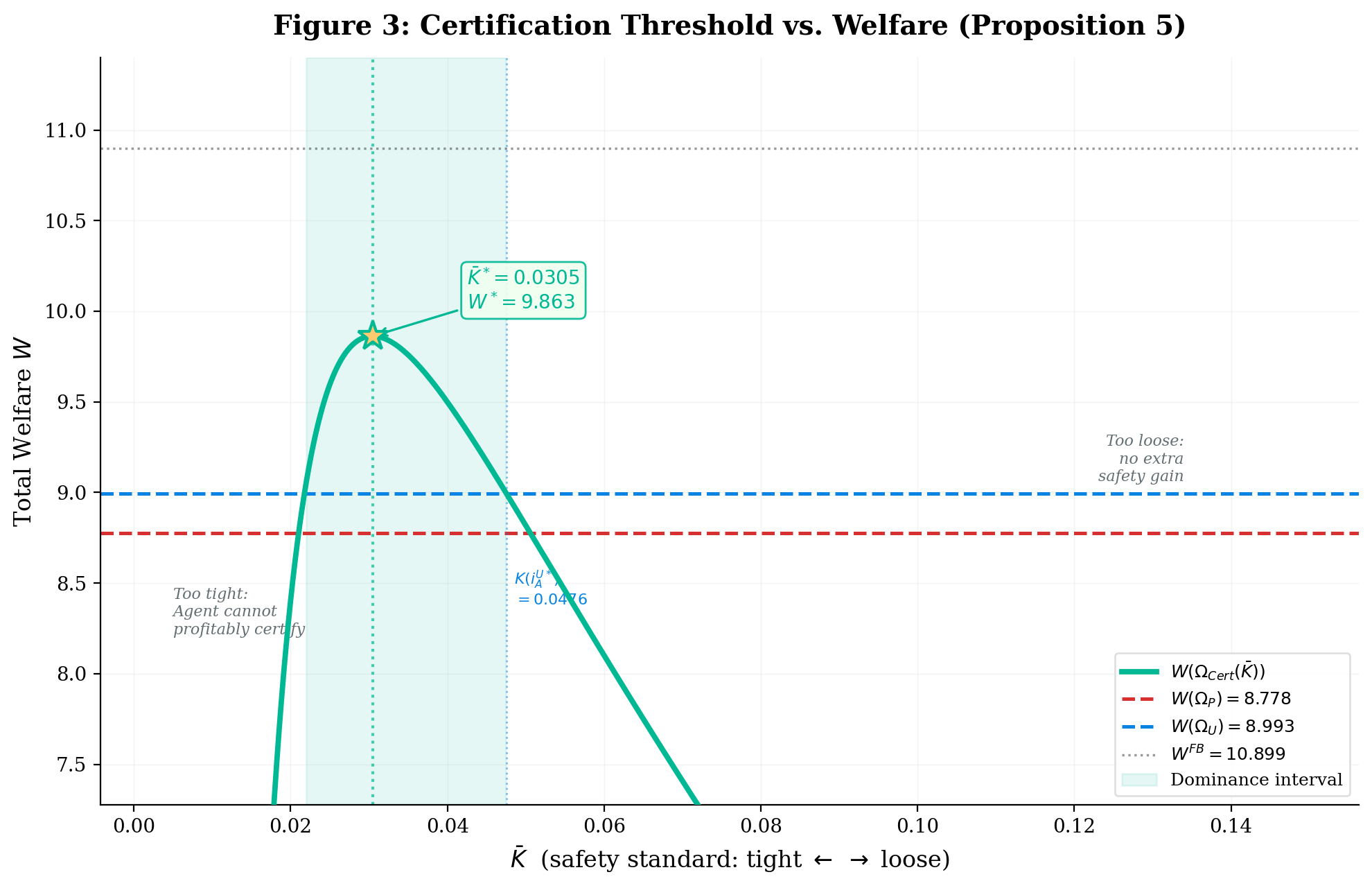}
\caption{Welfare under Certified Delegation as the risk standard varies.}
\label{fig:calib-cert-threshold}
\end{figure}

Figure~\ref{fig:calib-cert-threshold} illustrates the trade-off in the certification standard. If the standard is too lenient, Certified Delegation approaches User Control and does little to reduce residual risk. If the standard is too strict, the agent must invest heavily to qualify, raising compliance costs and potentially violating the incentive-compatibility constraint. The inverted-U shape is a property of the maintained numerical environment rather than a global theorem.

In the maintained profile, the welfare-maximizing threshold $\bar{K}^*=0.0305$ requires $i_A^c=1.96$. This value is above the agent's self-enforcing boundary in the private incentive problem, suggesting that implementation may require additional instruments such as access-fee adjustments, liability credits, compliance bonds, or targeted subsidies for verifiable risk mitigation.

\subsection{Scale-Normalized Counterfactual I: AI-Assisted Commerce}
\label{subsec:case-ai-commerce}

The first stylized counterfactual represents an e-commerce setting in which automated search, price comparison, and product matching generate substantial user-agent surplus. We use the Amazon--Perplexity environment only as an institutional archetype. The case label is not a legal or empirical finding about the named dispute.

The parameter profile is
\begin{equation}
\alpha_A=0.32,
\qquad
\alpha_P=0.25,
\qquad
\rho_0=0.12,
\qquad
\lambda=0.20,
\label{eq:amazon-parameters}
\end{equation}
together with an agent cost coefficient $\kappa_A=1.50$ and an annualized value anchor of \$50B/yr. The dollar values below are scale-normalized illustrations, not estimates of the welfare effects of the actual dispute.

Under these parameters, the environment falls in the User Control region. The welfare difference between User Control and Platform Control is
\begin{equation}
\Delta W(\Omega_U-\Omega_P)=+\$0.52\text{B/yr}.
\label{eq:amazon-zone-gain}
\end{equation}
Certified Delegation generates an additional scale-normalized welfare gain of \$2.80B/yr relative to the better polar regime. Deadweight loss falls from 20.2\% under Platform Control to 9.6\% under Certified Delegation. The main channel is investment recovery: relative agent investment increases from $i_A^*/i_A^{FB}=47.3\%$ under Platform Control to $i_A^c/i_A^{FB}=93.7\%$ at the certification threshold. In the decomposition, the productive investment channel contributes \$3.73B, while the direct risk-reduction channel contributes \$0.05B.

\begin{figure}[htbp]
\centering
\includegraphics[width=0.92\textwidth]{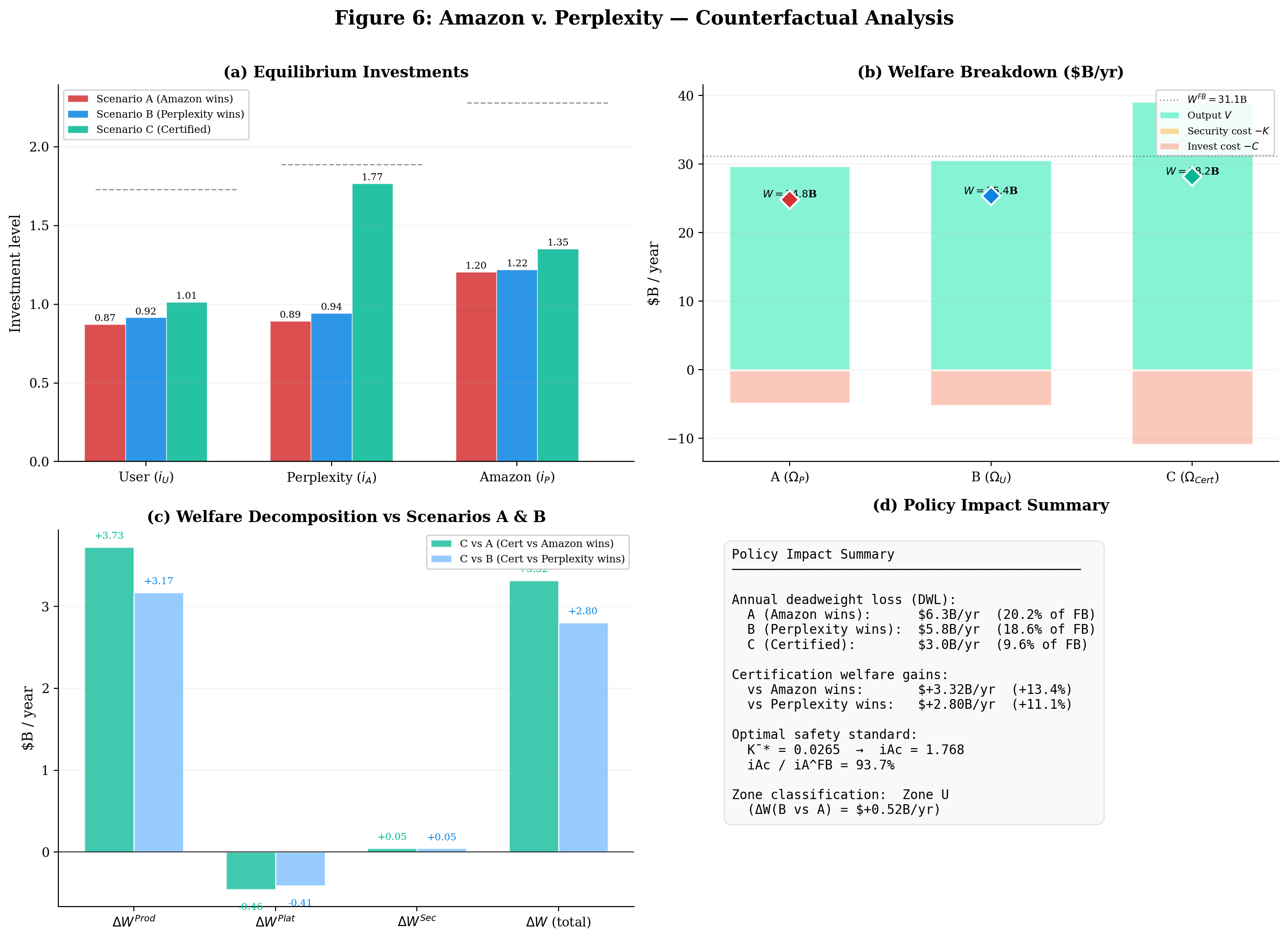}
\caption{Scale-normalized counterfactual: AI-assisted commerce.}
\label{fig:calib-amazon}
\end{figure}

Figure~\ref{fig:calib-amazon} illustrates the mechanism in a commerce-like environment. The case label is used to discipline parameter interpretation; it should not be read as a factual claim about the legal merits, business facts, or actual welfare consequences of any specific dispute.

\subsection{Scale-Normalized Counterfactual II: Closed Social Ecosystem}
\label{subsec:case-closed-social}

The second stylized counterfactual represents a closed social ecosystem in which identity systems, privacy boundaries, and platform infrastructure play a larger role. We use the WeChat--Doubao environment only as an institutional archetype. The case label is not a legal or empirical finding about the named dispute.

The parameter profile is
\begin{equation}
\alpha_A=0.18,
\qquad
\alpha_P=0.42,
\qquad
\rho_0=0.22,
\qquad
\lambda=0.08,
\label{eq:wechat-parameters}
\end{equation}
together with a higher User--Platform sub-coalition factor, $\beta_{UP}=0.75$, and an annualized value anchor of \$30B/yr. The dollar values below are scale-normalized illustrations, not estimates of the welfare effects of the actual dispute.

Under these parameters, the environment lies close to the boundary between the two polar regimes:
\begin{equation}
\Delta W(\Omega_U-\Omega_P)=+\$0.07\text{B/yr}.
\label{eq:wechat-zone-gain}
\end{equation}
Certified Delegation increases scale-normalized welfare by \$1.05B/yr relative to the better polar alternative. Deadweight loss falls from 17.4\% under Platform Control to 11.4\% under Certified Delegation. As in the commerce case, the main effect comes through investment recovery. Relative agent investment rises from $i_A^*/i_A^{FB}=48.2\%$ under Platform Control to $i_A^c/i_A^{FB}=92.0\%$ under certification. The productive channel contributes \$1.32B, while the direct risk-mitigation channel contributes \$0.05B.

\begin{figure}[htbp]
\centering
\includegraphics[width=0.92\textwidth]{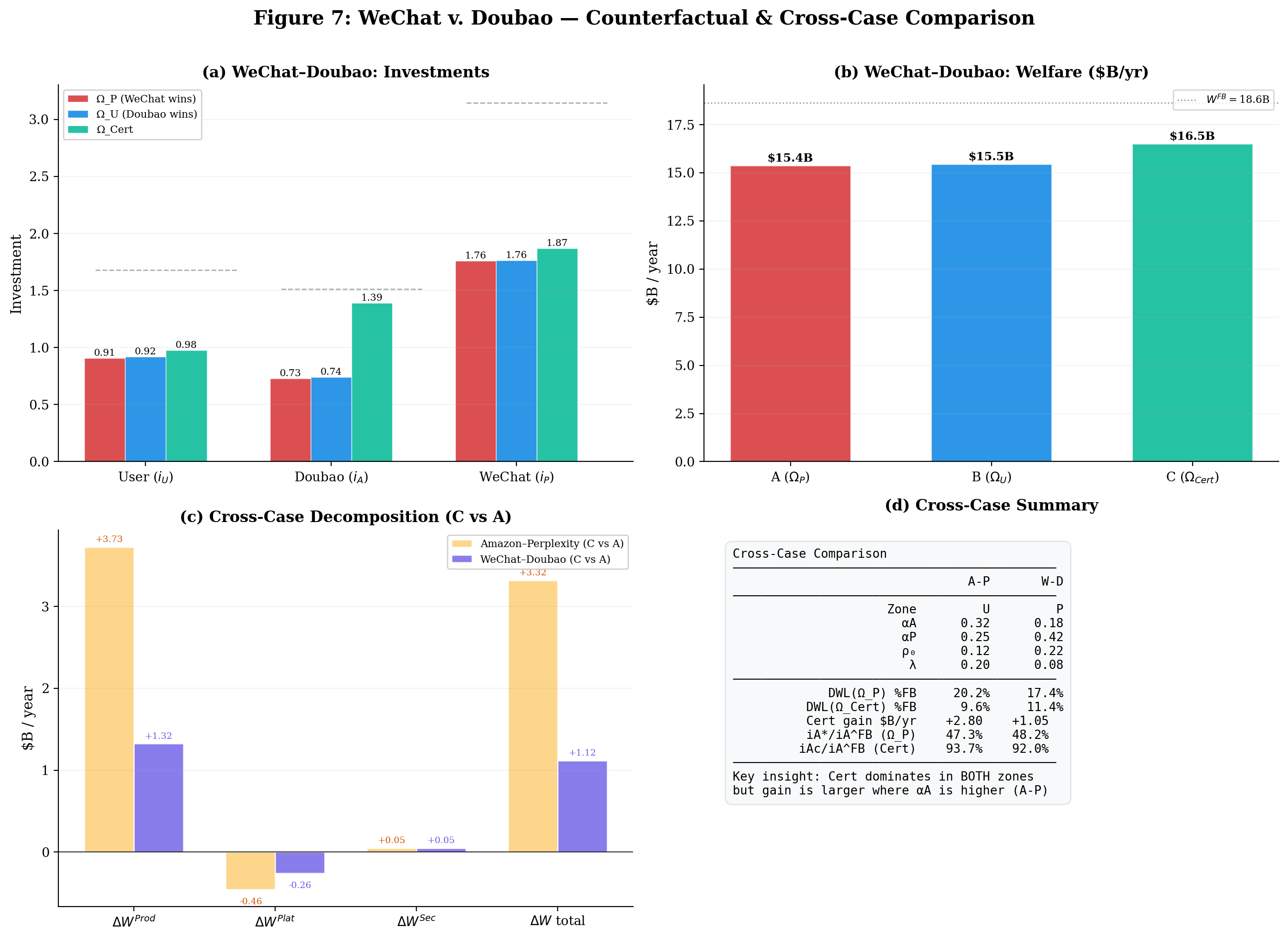}
\caption{Scale-normalized counterfactual: closed social ecosystem.}
\label{fig:calib-wechat}
\end{figure}

Figure~\ref{fig:calib-wechat} illustrates the mechanism in a social-ecosystem-like environment. The case label is used to discipline parameter interpretation; it should not be read as a factual claim about the legal merits, business facts, or actual welfare consequences of any specific dispute.

\subsection{Comparative Counterfactual Synthesis}
\label{subsec:case-comparison}

Table~\ref{tab:case-counterfactuals} compares the two stylized counterfactual environments. The dollar-denominated entries are scale-normalized illustrations based on the stated annualized value anchors. They are not estimates of the welfare effects of the actual disputes.

\begin{table}[htbp]
\centering
\small
\setlength{\tabcolsep}{5pt}
\caption{Scale-Normalized Counterfactual Simulations}
\label{tab:case-counterfactuals}
\begin{tabularx}{\textwidth}{Xcc}
\toprule
\textbf{Economic Metric} & \textbf{AI-Assisted Commerce} & \textbf{Closed Social Ecosystem} \\
\midrule
Institutional Archetype & Amazon--Perplexity Environment & WeChat--Doubao Environment \\
\addlinespace
Annualized Value Anchor & \$50B/yr & \$30B/yr \\
\addlinespace
Production Elasticities $(\alpha_A, \alpha_P)$ & $(0.32, 0.25)$ & $(0.18, 0.42)$ \\
\addlinespace
Primitive Constraints $(\rho_0, \lambda)$ & $(0.12, 0.20)$ & $(0.22, 0.08)$ \\
\addlinespace
Welfare-Dominant Polar Zone & Zone $\Omega_U$ & Near boundary \\
\addlinespace
Polar Welfare Differential $\Delta W(\Omega_U-\Omega_P)$ & \$0.52B/yr & \$0.07B/yr \\
\addlinespace
Net Welfare Gain of $\Omega_{\mathrm{Cert}}$ vs. Polar Baseline & \$2.80B/yr & \$1.05B/yr \\
\addlinespace
Deadweight Loss under Platform Control & 20.2\% & 17.4\% \\
\addlinespace
Deadweight Loss under Certified Delegation & 9.6\% & 11.4\% \\
\addlinespace
Relative Proxy Investment under Platform Control \newline ($i_A^*/i_A^{FB}$) & 47.3\% & 48.2\% \\
\addlinespace
Relative Proxy Investment under Certified Delegation \newline ($i_A^c/i_A^{FB}$) & 93.7\% & 92.0\% \\
\addlinespace
Productive Channel Capital Recovery & \$3.73B & \$1.32B \\
\addlinespace
Direct Security Channel Welfare Yield & \$0.05B & \$0.05B \\
\bottomrule
\end{tabularx}
\end{table}

The comparison points to a common pattern. In both stylized environments, Certified Delegation improves welfare mainly by reducing hold-up and restoring agent investment toward the first-best benchmark. The direct risk-reduction channel is positive but small in these parameterizations. Thus, in these numerical exercises, the safety standard matters not only because it lowers risk directly, but also because it creates a credible condition under which platform exclusion is limited.

\subsection{Robustness Analysis and Interpretive Boundaries}
\label{subsec:robustness-boundaries}

We finally examine how the main qualitative predictions behave across alternative specifications. The robustness exercise considers 14 model variants and tracks four predictions: hold-up ordering, zone partitioning, certification dominance, and the direction of the optimal threshold response to baseline risk.

For transparency, Table~\ref{tab:robustness-variants} reports the variant definitions used in the robustness exercise. The entries should correspond exactly to the simulation protocol. If the simulation code uses different variant names or parameter perturbations, this table should be updated to match the code before submission.

\begin{table}[htbp]
\centering
\small
\caption{Robustness Variant Definitions}
\label{tab:robustness-variants}
\begin{tabularx}{\textwidth}{lX}
\toprule
\textbf{Variant} & \textbf{Change Relative to Baseline} \\
\midrule
V1 & Higher baseline risk, $\rho_0$. \\
V2 & Lower outside-option effectiveness, $\lambda$. \\
V3 & Higher agent investment cost, $\kappa_A$. \\
V4 & Lower Agent--Platform sub-coalition factor, $\beta_{AP}$. \\
V5 & Higher User--Platform sub-coalition factor, $\beta_{UP}$. \\
V6 & Higher platform investment cost, $\kappa_P$. \\
V7 & Lower risk-mitigation efficiency, $\rho_1$. \\
V8 & Higher risk-mitigation efficiency, $\rho_1$. \\
V9 & Higher platform output elasticity, $\alpha_P$. \\
V10 & Higher agent output elasticity, $\alpha_A$. \\
V11 & Lower workaround effectiveness combined with higher baseline risk. \\
V12 & Higher certification cost or compliance friction. \\
V13 & Alternative sub-coalition values, $(\beta_{UP},\beta_{AP})$. \\
V14 & Alternative production-elasticity profile preserving decreasing returns. \\
\bottomrule
\end{tabularx}
\end{table}

Table~\ref{tab:robustness-variants} is a reporting device rather than a new experiment. Its role is to make the robustness exercise auditable. The numerical results reported below are unchanged; the table should simply mirror the variants already used in the simulation code.

Table~\ref{tab:robustness-summary} summarizes the qualitative support for the four predictions across the 14 variants.

\begin{table}[htbp]
\centering
\small
\setlength{\tabcolsep}{6pt}
\caption{Qualitative Robustness Summary}
\label{tab:robustness-summary}
\begin{tabularx}{\textwidth}{lYc}
\toprule
\textbf{Core Prediction} & \textbf{Economic Interpretation} & \textbf{Support Across Variants} \\
\midrule
Q1: Hold-up Ordering 
& Agent investment is lower under Platform Control:
\newline $i_A^*(\Omega_P)<i_A^*(\Omega_U)$ 
& 14/14 \\
\addlinespace
Q2: Zone Partitioning 
& The parameter space contains stable regions favoring either User Control or Platform Control 
& 11/14 \\
\addlinespace
Q3: Certification Dominance 
& The best Certified Delegation regime outperforms both polar regimes:
\newline $W(\Omega_{\mathrm{Cert}})>\max\{W(\Omega_P),W(\Omega_U)\}$ 
& 13/14 \\
\addlinespace
Q4: Threshold Direction 
& Higher baseline risk tightens the optimal safety standard:
\newline $\partial \bar{K}^*/\partial \rho_0 < 0$ 
& 11/14 \\
\bottomrule
\end{tabularx}
\end{table}

Table~\ref{tab:robustness-summary} should be interpreted qualitatively. The hold-up result is stable across all 14 variants. Certification dominance appears in most variants. By contrast, the exact partition of the parameter space and the comparative statics of the optimal threshold are less stable. This is consistent with the theoretical argument: the model supports a conditional mechanism, not a universal quantitative boundary.

\begin{figure}[htbp]
\centering
\includegraphics[width=\textwidth]{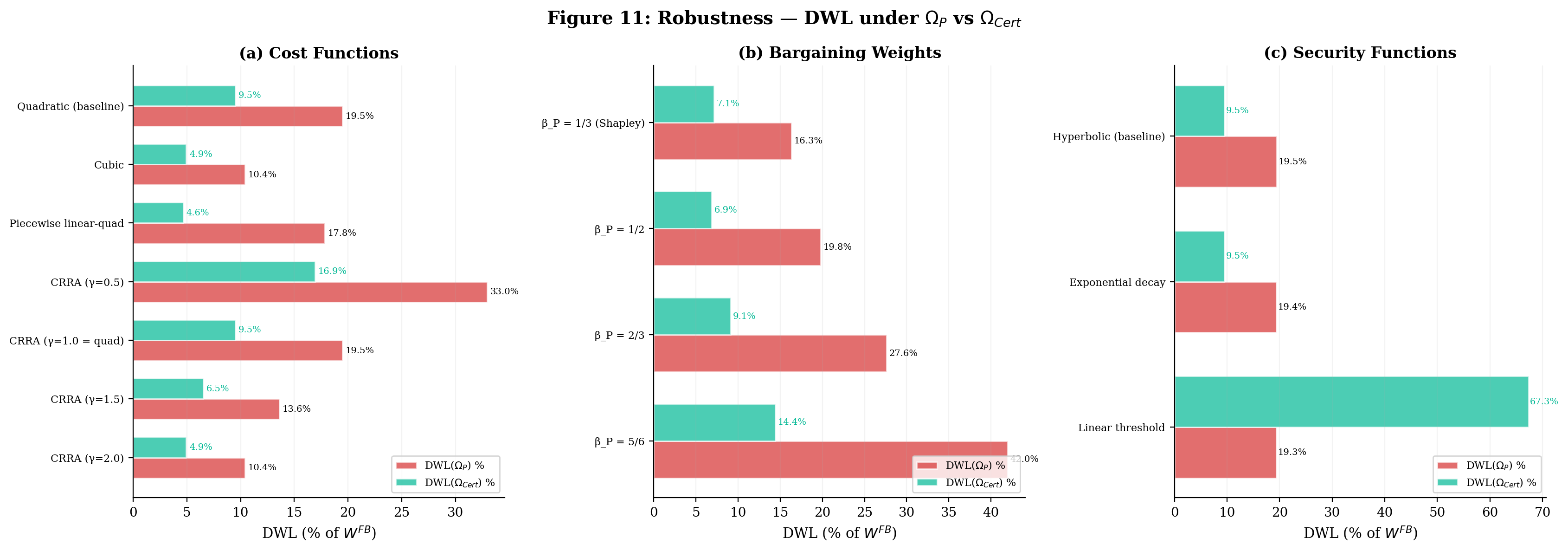}
\caption{Robustness diagnostic surface.}
\label{fig:calib-robustness}
\end{figure}

Figure~\ref{fig:calib-robustness} visualizes the robustness comparison across model variants. The figure supports the qualitative mechanism that Certified Delegation can reduce deadweight loss relative to Platform Control in many parameter environments. It does not imply that the same threshold, welfare ranking, or comparative static applies globally.

The robustness exercise therefore gives a mixed but informative picture. The strongest prediction is the hold-up ordering: Platform Control depresses agent investment relative to User Control across all variants considered. Certification dominance is also robust in most cases. The weaker predictions are the exact location of the polar-regime boundary and the direction of the optimal threshold response in every environment. These findings support the mechanism of the paper while limiting the scope of the numerical claims.

\subsection{Interpretive Boundaries}
\label{subsec:numerical-interpretive-boundaries}

The numerical analysis should be interpreted subject to four boundaries.

First, the parameter values are maintained values rather than estimated primitives. They are chosen to illustrate the mechanism, not to recover structural features of a particular platform market.

Second, the case labels are institutional archetypes. They discipline the interpretation of parameter profiles but do not constitute legal, factual, or empirical claims about the named disputes.

Third, dollar-denominated counterfactuals are scale-normalized illustrations based on stated annualized value anchors. They should not be read as welfare estimates for actual firms, platforms, users, or disputes.

Fourth, the scalar investment variable $i_A$ combines productive adaptation and compliance capability. The simulations therefore illustrate the joint effect of access-protecting certification rather than separately estimating production and safety investment technologies.

Subject to these boundaries, the numerical results help clarify the paper's main mechanism. Platform Control can reduce welfare by weakening the User--Agent coalition's investment incentives. User Control can mitigate hold-up but may leave residual risk. Certified Delegation can improve welfare when a verifiable certification threshold restores delegation-related investment while preserving the platform's right to refuse unsafe or non-compliant automation.

\section{Policy Implications and Conclusion}
\label{sec:policy_conclusion}

The model directly implies a conditional-access principle. Platform Control can protect infrastructure and give platforms room to respond to security risks, but it also gives the platform a strong ex-post veto over automated account operation. That veto can discourage users and agent providers from making platform-specific investments. User Control reduces this hold-up problem, but it may leave privacy, identity, security, congestion, and third-party risks insufficiently internalized.

The institutional details below translate this conditional-access principle into possible governance design features. They should not be read as uniquely optimal rules derived by the model. The model identifies the central trade-off: protected delegation can restore investment incentives, while certification can limit residual risk. The design question is how to make that protection conditional, reviewable, and proportionate.

We use \emph{Certified Delegation} to refer to the formal regime analyzed in the model. We use \emph{conditional delegation} to describe the broader policy principle. A user-authorized agent should receive protection against exclusion only when it satisfies verifiable requirements for authorization, revocability, auditability, risk control, and accountability. A certified agent would receive a protected route to the account interface. An uncertified, non-compliant, or high-risk agent would remain subject to the platform's right of refusal.

\subsection{The Conditional-Access Principle}
\label{subsec:policy-conditional-delegation}

Conditional delegation separates two questions that are often conflated. The first question is whether an account holder has a protectable interest in choosing an automated proxy to exercise rights that the user already holds. The second question is whether the selected proxy satisfies the safety and accountability requirements needed to protect the platform, non-delegating users, counterparties, and affected third parties.

This distinction is central to the policy implication. A qualified delegation right does not require platforms to give up ownership of their servers, code, identity systems, ranking tools, payment systems, security architecture, or core governance infrastructure. It also does not give users an unrestricted right to introduce any automated system into a platform environment. Instead, it limits arbitrary exclusion when the proxy is authorized by the user, remains within the user's account rights, and satisfies a defined certification standard.

The standard should be risk-proportionate. Low-risk tasks, such as product comparison, basic summarization, or preference-based search assistance, may require relatively light validation. Higher-risk tasks, such as social messaging, financial execution, healthcare management, legal submission, identity routing, or actions affecting third parties in material ways, require stronger authorization, logging, monitoring, and liability rules. The relevant policy question is therefore not whether automation is present. It is whether the automated proxy is bounded, observable, revocable, and accountable.

Table~\ref{tab:policy_model_to_design} summarizes how the model's mechanisms translate into policy design principles.

\begin{table}[htbp]
\centering
\small
\caption{From Model Mechanism to Policy Design}
\label{tab:policy_model_to_design}
\begin{tabularx}{\textwidth}{p{0.25\textwidth} X X}
\toprule
\textbf{Model Mechanism} & \textbf{Policy Principle} & \textbf{Design Implication} \\
\midrule
Platform veto can create hold-up
& Limit arbitrary exclusion of qualified user-authorized proxies
& Certified agents receive access protection; refusal must be evidence-based and reviewable. \\
\addlinespace
User Control can leave residual externalities
& Make delegation conditional on verifiable risk controls
& Certification should include authorization, revocation, auditability, rate limits, data minimization, and incident response. \\
\addlinespace
Certification changes residual control
& Treat certification as access-protecting certification, not merely technical scoring
& Once the proxy satisfies the standard, the platform cannot refuse access solely because execution is automated. \\
\addlinespace
Certification may be noisy or strategic
& Provide procedural safeguards
& Standards should be published; refusals should be logged; agents should have appeal and recertification channels. \\
\addlinespace
Risk varies across environments
& Use proportionality rather than a universal rule
& Low-risk information tasks and high-stakes account operations should face different compliance burdens. \\
\bottomrule
\end{tabularx}
\end{table}

The table is a translation device rather than an additional formal result. It shows how the model's residual-control logic can discipline institutional design without implying that any single certification architecture is uniquely optimal.

\subsection{Certification Architecture}
\label{subsec:policy-certification-architecture}

A conditional delegation regime requires an institution capable of evaluating compliance. The model does not determine who should perform certification. It does, however, imply that the certifier's design matters because certification reallocates residual control over the account interface. A certification decision does not merely label a proxy as safe or unsafe; it determines whether the platform's right of refusal is limited.

There are three possible certification architectures.

A first approach is platform-administered certification. Platforms have detailed knowledge of their own systems, including authentication rules, rate limits, abuse patterns, security risks, and infrastructure constraints. This information advantage makes platform review technically efficient. The drawback is conflict of interest. A platform may use safety requirements to raise rivals' costs, protect its own assistant products, preserve advertising interfaces, or maintain existing monetization channels.

A second approach is government-administered certification. Public certification can reduce platform self-preferencing and provide stronger legal authority. Its weakness is speed and technical specificity. AI-agent capabilities, interface designs, security vulnerabilities, and abuse patterns change quickly. A centralized public process may update too slowly and may lack the operational detail needed for day-to-day interface governance.

A third approach is independent technical certification under public oversight. Independent auditors would assess agents against published requirements. Public authorities would set procedural safeguards: non-discrimination, transparency, appeal rights, periodic review, conflict-of-interest limits, and rules against certification capture. This approach is not costless, but it better separates technical evaluation from platform incentives.

The institutional choice should depend on the risk environment. Platform-administered review may be appropriate for low-risk integrations where the platform's informational advantage is large and exclusion incentives are weak. Independent certification under public oversight becomes more attractive when the platform competes with agent providers, when refusal can foreclose user-side intermediation, or when certified access has market-wide significance.

\subsection{Operationalizing the Compliance Frontier}
\label{subsec:policy-certification-dimensions}

In the model, certification is represented by a scalar threshold $\bar{K}$ on expected risk. In practice, this threshold must be translated into observable and auditable requirements. Certification should not evaluate only whether an agent can complete a task. It should evaluate whether the agent completes the task within authorized limits and leaves a reviewable record.

Table~\ref{tab:policy-certification-dimensions} lists the main dimensions of the certification interface.

\begin{table}[htbp]
\centering
\small
\caption{Certification Requirements}
\label{tab:policy-certification-dimensions}
\begin{tabularx}{\textwidth}{p{0.25\textwidth} X}
\toprule
\textbf{Dimension} & \textbf{Verifiable Requirement} \\
\midrule
Privacy and Data Minimization
& The proxy should access only the data needed for the authorized task; limit retention and onward transfer; prevent unauthorized telemetry; and delete relevant credentials or data after user revocation. \\
\addlinespace
System Load and Rate Limits
& The proxy should comply with rate limits, concurrency limits, retry rules, and back-off protocols designed to prevent congestion, scraping-like overload, or service degradation. \\
\addlinespace
Authorization Scope and Identity Protection
& The proxy should verify explicit user authorization, operate within the user's account rights, preserve the user's identity as principal, and disclose automated execution when disclosure is relevant to counterparties or platform integrity. \\
\addlinespace
Auditability and Logging
& The proxy should maintain reliable logs of authorization states, access events, execution timestamps, user confirmations, exceptional outcomes, and system responses so that disputes can be reviewed ex post. \\
\addlinespace
Security and Incident Response
& The proxy provider should maintain anomaly detection, suspension protocols, reporting channels, and procedures for credential compromise, prompt-injection attacks, unauthorized execution, or data leakage. \\
\addlinespace
Accountability and Liability Readiness
& The proxy provider should accept responsibility for out-of-scope execution, misleading compliance claims, failure to maintain required logs, and unsafe automation under its control. \\
\bottomrule
\end{tabularx}
\end{table}

These requirements make the abstract risk threshold operational. They also preserve the conceptual boundaries of delegation rights. Delegation is not account transfer, independent scraping, or unrestricted API access. It is authorized, revocable, scope-limited proxy execution of account privileges that the user already holds.

Certification should also be task-specific. Permission to compare product listings should not automatically authorize payment execution. Permission to summarize messages should not automatically authorize message transmission. Permission to organize travel options should not automatically authorize purchases, cancellations, or legally binding submissions. The standard should become stricter as the proxy moves from information processing to action execution, and stricter again when the action is irreversible, financially material, legally significant, or harmful to third parties if performed incorrectly.

\subsection{Dynamic Standards and Evidence-Based Refusal}
\label{subsec:policy-dynamic-refusal}

Certification standards should change as information changes. Let the risk threshold at time $t$ be written as
\begin{equation}
\bar{K}_t=\bar{K}(\mathcal{I}_t),
\label{eq:policy-dynamic-threshold}
\end{equation}
where $\mathcal{I}_t$ includes information about known vulnerabilities, realized load shocks, compliance costs, proxy failures, new agent capabilities, abuse patterns, third-party complaints, and observed patterns of platform refusal.

In practice, updating should combine periodic review with event-driven intervention. Periodic review allows standards to adjust as agent capabilities and compliance technologies improve. Event-driven updates allow a regulator, certifier, or platform to respond to new vulnerabilities, large-scale failures, credential compromise, privacy breaches, or evidence that a platform is using safety claims as a pretext for exclusion.

Certified Delegation does not eliminate the platform's right to protect its systems. A platform should be able to suspend or terminate proxy access when there is documented non-compliance, such as token expiration, user revocation, rate-limit violations, fraudulent injection, abnormal load, unauthorized data access, or verified privacy breaches. But because certification limits the platform's residual right of refusal, suspension should be governed by procedural safeguards.

The guiding principle is proportionality. Refusal should be evidence-based, risk-based, non-discriminatory, and reviewable. Emergency suspensions should be time-bounded and supported by logged technical reasons. When the risk is immediate and severe, temporary suspension may be justified before full review. When the risk is disputed or remediable, the certified agent should have access to notice, cure, appeal, and recertification procedures.

\subsection{User Authorization and Revocability}
\label{subsec:policy-user-authorization}

Technical certification of the proxy is necessary but not sufficient. Delegation is valid only when it rests on continuing authorization by the account holder. That authorization should be explicit, granular, task-specific, revocable, and limited to the user's existing account rights.

Explicit authorization means that the user knows which proxy is being authorized, which data fields are exposed, how long access lasts, which actions are permitted, and which actions require separate confirmation. Granular authorization means that the user can authorize one task without authorizing another. Task-specific authorization prevents permission for one workflow from becoming a general license to act inside the account.

Revocability is equally important. The user should be able to terminate the proxy's authority without terminating the underlying account relationship. Revocation should invalidate active credentials, stop pending automated execution, and trigger data-deletion or retention-limit obligations where applicable. A delegation regime without effective revocation would begin to resemble account transfer or uncontrolled third-party access rather than user-authorized proxy execution.

High-risk actions should require additional confirmation. These include irreversible financial transfers, changes to security settings, legally binding submissions, deletion of important data, account recovery actions, medical or legal communications, and messages or transactions that materially affect third parties. Users should also have access to logs showing what the proxy did, when it acted, under what authorization, and whether any action required additional confirmation.

These requirements preserve the defining features of delegation. The user remains the principal. The proxy acts within a limited scope. The platform retains evidence-based refusal rights against unsafe or non-compliant automation. Third parties receive protection through disclosure, logging, and accountability rules when proxy execution affects their interests.

\subsection{Liability and Risk Sharing}
\label{subsec:policy-liability}

The model does not derive a unique liability rule. It does, however, suggest a useful allocation principle: responsibility should fall on the party best positioned to prevent, monitor, or control the relevant risk. Liability rules should complement certification by making responsibility traceable when failures occur.

Table~\ref{tab:policy-liability-allocation} summarizes a possible allocation of primary responsibility.

\begin{table}[htbp]
\centering
\small
\caption{Liability Principles under Certified Delegation}
\label{tab:policy-liability-allocation}
\begin{tabularx}{\textwidth}{p{0.22\textwidth} X}
\toprule
\textbf{Party} & \textbf{Primary Responsibility} \\
\midrule
Agent Provider
& Responsible for out-of-scope execution, unsafe automation under its control, privacy breaches caused by the proxy, failure to maintain required logs, misleading compliance claims, and failure to respect authorization, revocation, or rate limits. \\
\addlinespace
Host Platform
& Responsible for arbitrary or discriminatory exclusion of certified proxies, self-preferencing disguised as security enforcement, failure to honor valid delegation credentials, failure to provide logged reasons for refusal, and negligence in baseline authentication or interface security. \\
\addlinespace
Account Holder
& Responsible for intentionally authorizing malicious proxies, providing fraudulent instructions, misusing a compliant proxy to carry out unlawful actions, or knowingly overriding required warnings and confirmations. \\
\addlinespace
Independent Certifier
& Responsible for grossly negligent review, corrupt certification, unmanaged conflicts of interest, failure to update standards after known risks, or failure to follow published certification procedures. \\
\bottomrule
\end{tabularx}
\end{table}

These rules can be supported by insurance, compliance bonds, escrow arrangements, logged technical notices, and fast dispute-resolution procedures. High-risk proxy providers may be required to maintain insurance or post a bond before certification. Platforms, in turn, should provide logged technical reasons when rejecting, suspending, or terminating a certified proxy. The purpose is not to eliminate all disputes. It is to make responsibility traceable and to reduce the ability of either side to exploit uncertainty strategically.

\subsection{Design Risks and Institutional Safeguards}
\label{subsec:policy-design-risks}

A certification-contingent access regime also creates design risks. Certification can become too lenient, allowing unsafe automation to receive protected access. It can become too strict, turning compliance into an entry barrier. It can become captured by platforms, certifiers, or incumbent agent providers. It can also become stale if standards fail to update as agent capabilities and threat models change.

These risks do not overturn the conditional-access principle, but they affect institutional design. A workable regime should include published standards, non-discriminatory review, proportional requirements, audit trails, appeal rights, periodic updating, and emergency suspension procedures. The certifier should distinguish between refusal based on documented technical risk and refusal based on a platform's preference to avoid user-side intermediation.

The regime should also account for third-party interests. In commerce environments, many delegation tasks primarily affect the delegating user and the host platform. In social, financial, medical, or identity-sensitive environments, proxy execution may affect non-delegating users, counterparties, or bystanders. Stronger standards are justified when delegated execution touches communications, social graphs, sensitive personal data, payment credentials, legal obligations, or identity signals.

\subsection{Conclusion}
\label{subsec:policy-conclusion}

This paper has argued that AI agents create a new control problem in digital platforms. The issue is not simply data ownership, API access, platform infrastructure, or account transferability. It is the mode through which an account holder exercises rights that she already has. We call this margin a delegation right.

The model shows why this margin matters. If the platform holds an unconditional veto over automated account use, users and agent providers may underinvest in workflows, adaptation, and compliance because access can be withheld after investments are sunk. If the user holds an unconditional right to delegate, the hold-up problem is reduced, but some risks to the platform and to third parties may remain outside the agent's private incentives.

Certified Delegation offers a conditional alternative. It protects user-authorized delegation when the proxy satisfies verifiable standards of authorization, revocability, auditability, risk control, and accountability. At the same time, it preserves the platform's ability to exclude uncertified or non-compliant automation. Certification is therefore not only a safety screen; it is a conditional allocation of residual control over the account interface.

The illustrative mechanism simulations show how such a rule can reduce deadweight loss by restoring delegation-related investment while limiting residual risk. The scale-normalized counterfactuals further suggest that the appropriate standard should vary across environments, with lighter requirements for low-risk information tasks and stricter requirements for privacy-sensitive, identity-sensitive, or high-stakes account operations. These numerical exercises are not estimates of the welfare effects of actual disputes. They are illustrations of the model's mechanisms under maintained parameter values.

As AI agents become a more common interface for digital consumption and exchange, platform governance will need rules more precise than either platform absolutism or unconditional access mandates. A workable regime should make delegation conditional, revocable, auditable, evidence-based, and proportionate to risk. The paper's central claim is therefore limited but important: a user-authorized, scope-limited, identity-preserving, and accountable proxy should receive protection against exclusion when it satisfies a verifiable certification standard, while unsafe or non-compliant automation should remain subject to platform refusal.
\appendix

=======================================
\bibliographystyle{plainnat}
\bibliography{references}

\begin{thebibliography}{45}
\providecommand{\natexlab}[1]{#1}
\providecommand{\url}[1]{\texttt{#1}}
\expandafter\ifx\csname urlstyle\endcsname\relax
  \providecommand{\doi}[1]{doi: #1}\else
  \providecommand{\doi}{doi: \begingroup \urlstyle{rm}\Url}\fi

\bibitem[Aghion and Tirole(1997)]{AghionTirole1997}
Philippe Aghion and Jean Tirole.
\newblock Formal and real authority in organizations.
\newblock \emph{Journal of Political Economy}, 105\penalty0 (1):\penalty0 1--29, 1997.

\bibitem[{Anthropic}(2024)]{AnthropicComputerUse2024}
{Anthropic}.
\newblock Introducing computer use, a new claude 3.5 sonnet, and claude 3.5 haiku.
\newblock Anthropic announcement, 2024.

\bibitem[Armstrong(2006)]{Armstrong2006}
Mark Armstrong.
\newblock Competition in two-sided markets.
\newblock \emph{RAND Journal of Economics}, 37\penalty0 (3):\penalty0 668--691, 2006.

\bibitem[Boudreau(2010)]{Boudreau2010}
Kevin~J. Boudreau.
\newblock Open platform strategies and innovation: Granting access vs. devolving control.
\newblock \emph{Management Science}, 56\penalty0 (10):\penalty0 1849--1872, 2010.

\bibitem[Boudreau and Hagiu(2009)]{BoudreauHagiu2009}
Kevin~J. Boudreau and Andrei Hagiu.
\newblock Platform rules: Multi-sided platforms as regulators.
\newblock Working paper, Harvard Business School, 2009.

\bibitem[Calo(2015)]{Calo2015}
Ryan Calo.
\newblock Robotics and the lessons of cyberlaw.
\newblock \emph{California Law Review}, 103\penalty0 (3):\penalty0 513--563, 2015.

\bibitem[Cr\'emer et~al.(2019)Cr\'emer, de~Montjoye, and Schweitzer]{CremerMontjoyeSchweitzer2019}
Jacques Cr\'emer, Yves-Alexandre de~Montjoye, and Heike Schweitzer.
\newblock Competition policy for the digital era.
\newblock Technical report, European Commission, 2019.

\bibitem[Eisenmann et~al.(2009)Eisenmann, Parker, and Van~Alstyne]{EisenmannParkerVanAlstyne2009}
Thomas Eisenmann, Geoffrey Parker, and Marshall Van~Alstyne.
\newblock Opening platforms: How, when and why?
\newblock In Annabelle Gawer, editor, \emph{Platforms, Markets and Innovation}, pages 131--162. Edward Elgar, 2009.

\bibitem[{European Parliament and Council}(2016)]{GDPR2016}
{European Parliament and Council}.
\newblock Regulation (eu) 2016/679 (general data protection regulation), article 20: Right to data portability.
\newblock Official Journal of the European Union, L 119, 2016.

\bibitem[{European Parliament and Council}(2022{\natexlab{a}})]{DMA2022}
{European Parliament and Council}.
\newblock Regulation (eu) 2022/1925 on contestable and fair markets in the digital sector (digital markets act).
\newblock Official Journal of the European Union, L 265, 2022{\natexlab{a}}.

\bibitem[{European Parliament and Council}(2022{\natexlab{b}})]{DSA2022}
{European Parliament and Council}.
\newblock Regulation (eu) 2022/2065 on a single market for digital services (digital services act).
\newblock Official Journal of the European Union, L 277, 2022{\natexlab{b}}.

\bibitem[{European Parliament and Council}(2023)]{DataAct2023}
{European Parliament and Council}.
\newblock Regulation (eu) 2023/2854 on harmonised rules on fair access to and use of data (data act).
\newblock Official Journal of the European Union, L 2023/2854, 2023.

\bibitem[{European Parliament and Council}(2024)]{EUAIAct2024}
{European Parliament and Council}.
\newblock Regulation (eu) 2024/1689 laying down harmonised rules on artificial intelligence (ai act).
\newblock Official Journal of the European Union, 2024.

\bibitem[Graef et~al.(2013)Graef, Verschakelen, and Valcke]{GraefVerschakelenValcke2013}
Inge Graef, Jeroen Verschakelen, and Peggy Valcke.
\newblock Putting the right to data portability into a competition law perspective.
\newblock \emph{Law: The Journal of the Higher School of Economics}, pages 53--63, 2013.
\newblock No. 3.

\bibitem[Grossman and Hart(1986)]{GrossmanHart1986}
Sanford~J. Grossman and Oliver~D. Hart.
\newblock The costs and benefits of ownership: A theory of vertical and lateral integration.
\newblock \emph{Journal of Political Economy}, 94\penalty0 (4):\penalty0 691--719, 1986.

\bibitem[Hardt(2012)]{RFC6749}
D.~Hardt.
\newblock The {OAuth} 2.0 authorization framework.
\newblock IETF RFC 6749, 2012.

\bibitem[Hart(1995)]{Hart1995}
Oliver Hart.
\newblock \emph{Firms, Contracts, and Financial Structure}.
\newblock Oxford University Press, 1995.

\bibitem[Hart and Moore(1990)]{HartMoore1990}
Oliver Hart and John Moore.
\newblock Property rights and the nature of the firm.
\newblock \emph{Journal of Political Economy}, 98\penalty0 (6):\penalty0 1119--1158, 1990.

\bibitem[Holmstr\"om and Tirole(1989)]{HolmstromTirole1989}
Bengt Holmstr\"om and Jean Tirole.
\newblock The theory of the firm.
\newblock In Richard Schmalensee and Robert~D. Willig, editors, \emph{Handbook of Industrial Organization}, volume~1, pages 61--133. Elsevier, 1989.

\bibitem[{International Organization for Standardization}(2023)]{ISO42001}
{International Organization for Standardization}.
\newblock {ISO/IEC} 42001:2023 --- artificial intelligence management system.
\newblock ISO, 2023.

\bibitem[Klein et~al.(1978)Klein, Crawford, and Alchian]{KleinCrawfordAlchian1978}
Benjamin Klein, Robert~G. Crawford, and Armen~A. Alchian.
\newblock Vertical integration, appropriable rents, and the competitive contracting process.
\newblock \emph{Journal of Law and Economics}, 21\penalty0 (2):\penalty0 297--326, 1978.
\newblock \doi{10.1086/466922}.

\bibitem[Kr\"amer(2021)]{Kramer2021}
Jan Kr\"amer.
\newblock Personal data portability in the platform economy: Economic implications and policy recommendations.
\newblock \emph{Journal of Competition Law \& Economics}, 17\penalty0 (2):\penalty0 263--308, 2021.
\newblock \doi{10.1093/joclec/nhaa030}.

\bibitem[Kr\"amer and St\"udlein(2019)]{KramerStuedlein2019}
Jan Kr\"amer and Nadine St\"udlein.
\newblock Data portability, data disclosure and data-induced switching costs: Some unintended consequences of the general data protection regulation.
\newblock \emph{Economics Letters}, 181:\penalty0 99--103, 2019.
\newblock \doi{10.1016/j.econlet.2019.05.015}.

\bibitem[Kr\"amer et~al.(2020)Kr\"amer, Senellart, and de~Streel]{KramerSenellartDeStreel2020}
Jan Kr\"amer, Pierre Senellart, and Alexandre de~Streel.
\newblock Making data portability more effective for the digital economy.
\newblock Technical report, Centre on Regulation in Europe (CERRE), 2020.

\bibitem[Lodderstedt et~al.(2023)Lodderstedt, Richer, and Campbell]{RFC9396}
T.~Lodderstedt, J.~Richer, and B.~Campbell.
\newblock {OAuth} 2.0 rich authorization requests.
\newblock IETF RFC 9396, 2023.

\bibitem[Lodderstedt et~al.(2025)Lodderstedt, Bradley, Labunets, and Fett]{RFC9700}
T.~Lodderstedt, J.~Bradley, A.~Labunets, and D.~Fett.
\newblock Best current practice for {OAuth} 2.0 security.
\newblock IETF RFC 9700, 2025.

\bibitem[{National Institute of Standards and Technology}(2023)]{NISTAIRMF2023}
{National Institute of Standards and Technology}.
\newblock Ai risk management framework (ai rmf 1.0).
\newblock NIST, 2023.

\bibitem[{National Institute of Standards and Technology}(2024)]{NISTGenAIProfile2024}
{National Institute of Standards and Technology}.
\newblock Artificial intelligence risk management framework: Generative artificial intelligence profile.
\newblock NIST AI 600-1, 2024.

\bibitem[{OECD}(2021)]{OECD2021Portability}
{OECD}.
\newblock Data portability, interoperability and competition.
\newblock Technical report, OECD Directorate for Financial and Enterprise Affairs, Competition Committee, 2021.

\bibitem[{OpenAI}(2025{\natexlab{a}})]{OpenAIChatGPTAgent2025}
{OpenAI}.
\newblock Chatgpt agent system card.
\newblock OpenAI, 2025{\natexlab{a}}.

\bibitem[{OpenAI}(2025{\natexlab{b}})]{OpenAIOperator2025}
{OpenAI}.
\newblock Operator system card.
\newblock OpenAI, 2025{\natexlab{b}}.

\bibitem[Parker and Van~Alstyne(2005)]{ParkerVanAlstyne2005}
Geoffrey~G. Parker and Marshall~W. Van~Alstyne.
\newblock Two-sided network effects: A theory of information product design.
\newblock \emph{Management Science}, 51\penalty0 (10):\penalty0 1494--1504, 2005.

\bibitem[Raji et~al.(2020)Raji, Smart, White, Mitchell, Gebru, Hutchinson, Smith-Loud, Theron, and Barnes]{RajiEtAl2020}
Inioluwa~Deborah Raji, Andrew Smart, Rebecca~N. White, Margaret Mitchell, Timnit Gebru, Ben Hutchinson, Jamila Smith-Loud, Daniel Theron, and Parker Barnes.
\newblock Closing the {AI} accountability gap: Defining an end-to-end framework for internal algorithmic auditing.
\newblock In \emph{Proceedings of the 2020 Conference on Fairness, Accountability, and Transparency}, pages 33--44, 2020.
\newblock \doi{10.1145/3351095.3372873}.

\bibitem[Rochet and Tirole(2003)]{RochetTirole2003}
Jean-Charles Rochet and Jean Tirole.
\newblock Platform competition in two-sided markets.
\newblock \emph{Journal of the European Economic Association}, 1\penalty0 (4):\penalty0 990--1029, 2003.

\bibitem[Rochet and Tirole(2006)]{RochetTirole2006}
Jean-Charles Rochet and Jean Tirole.
\newblock Two-sided markets: A progress report.
\newblock \emph{RAND Journal of Economics}, 37\penalty0 (3):\penalty0 645--667, 2006.

\bibitem[Shapley(1953)]{Shapley1953}
Lloyd~S. Shapley.
\newblock A value for n-person games.
\newblock In Harold~W. Kuhn and Albert~W. Tucker, editors, \emph{Contributions to the Theory of Games II}, pages 307--317. Princeton University Press, 1953.

\bibitem[{Stigler Committee on Digital Platforms}(2019)]{StiglerReport2019}
{Stigler Committee on Digital Platforms}.
\newblock Final report.
\newblock Technical report, Stigler Center, University of Chicago, 2019.

\bibitem[{Supreme Court of the United States}(2021)]{VanBuren2021}
{Supreme Court of the United States}.
\newblock Van buren v. united states, 593 u.s. 374 (2021), 2021.

\bibitem[Swire and Lagos(2013)]{SwireLagos2013}
Peter Swire and Yianni Lagos.
\newblock Why the right to data portability likely reduces consumer welfare: Antitrust and privacy critique.
\newblock \emph{Maryland Law Review}, 72\penalty0 (2):\penalty0 335--380, 2013.

\bibitem[Tirole(1999)]{Tirole1999}
Jean Tirole.
\newblock Incomplete contracts: Where do we stand?
\newblock \emph{Econometrica}, 67\penalty0 (4):\penalty0 741--781, 1999.

\bibitem[Tiwana et~al.(2010)Tiwana, Konsynski, and Bush]{TiwanaKonsynskiBush2010}
Amrit Tiwana, Benn Konsynski, and Ashley~A. Bush.
\newblock Platform evolution: Coevolution of platform architecture, governance, and environmental dynamics.
\newblock \emph{Information Systems Research}, 21\penalty0 (4):\penalty0 675--687, 2010.

\bibitem[{United States Court of Appeals for the Ninth Circuit}(2022)]{hiQLinkedIn2022}
{United States Court of Appeals for the Ninth Circuit}.
\newblock hiq labs, inc. v. linkedin corp., 31 f.4th 1180 (9th cir.\ 2022), 2022.

\bibitem[Weitzenb\"ock(2004)]{Weitzenboeck2004}
Emily~M. Weitzenb\"ock.
\newblock Good faith and fair dealing in contracts formed and performed by electronic agents.
\newblock \emph{Artificial Intelligence and Law}, 12\penalty0 (1--2):\penalty0 83--110, 2004.

\bibitem[Williamson(1985)]{Williamson1985}
Oliver~E. Williamson.
\newblock \emph{The Economic Institutions of Capitalism: Firms, Markets, Relational Contracting}.
\newblock Free Press, New York, 1985.

\bibitem[Zhou et~al.(2023)Zhou, Xu, Zhu, Zhou, Lo, Sridhar, Cheng, Ou, Bisk, Fried, Alon, and Neubig]{WebArena2023}
Shuyan Zhou, Frank~F. Xu, Hao Zhu, Xuhui Zhou, Robert Lo, Abishek Sridhar, Xianyi Cheng, Tianyue Ou, Yonatan Bisk, Daniel Fried, Uri Alon, and Graham Neubig.
\newblock {WebArena}: A realistic web environment for building autonomous agents.
\newblock arXiv:2307.13854, 2023.

\end{thebibliography}

\end{document}